\newtheorem{theorem}{Theorem}
\newtheorem{prop}{Proposition}
\newtheorem{lemma}{Lemma}
\newtheorem{coro}[lemma]{Corollary}
\newtheorem{defi}{Definition}
\begin{document}

\title{Worst-Case Expected-Capacity Loss of\\ Slow-Fading Channels}

\author{Jae~Won~Yoo, Tie~Liu, Shlomo~Shamai~(Shitz), and Chao~Tian
\thanks{This paper was presented in part at the 2012 IEEE International Symposium on Information Theory, Cambridge, MA, USA, July 2012. This research was supported in part by the National Science Foundation under Grant CCF-08-45848 and by the Philipson Fund for Electrical Power, Technion Research Authority. J.~W.~Yoo and T.~Liu are with the Department of Electrical and Computer Engineering, Texas A\&M University, College Station, TX 77843, USA (email: \{yoojw78,tieliu\}@tamu.edu). S.~Shamai is with the Department of Electrical Engineering, Technion--Israel Institute of Technology, Haifa 32000, Israel (email: sshlomo@ee.technion.ac.il). C.~Tian is with the AT\&T Labs-Research, Florham Park, NJ 07932, USA (e-mail: tian@research.att.com).}
}

\date{\today}

\maketitle

\begin{abstract}
For delay-limited communication over block-fading channels, the difference between the ergodic capacity and the maximum achievable expected rate for coding over a finite number of coherent blocks represents a fundamental measure of the penalty incurred by the delay constraint. This paper introduces a notion of worst-case expected-capacity loss. Focusing on the slow-fading scenario (one-block delay), the worst-case additive and multiplicative expected-capacity losses are precisely characterized for the point-to-point fading channel. Extension to the problem of writing on fading paper is also considered, where both the ergodic capacity and the additive expected-capacity loss over one-block delay are characterized to within one bit per channel use.
\end{abstract}

\section{Introduction}
Consider the discrete-time baseband representation of the single-user flat-fading channel: 
\begin{equation}
Y[t]=\sqrt{G[t]}X[t]+Z[t]
\label{eq:Ch}
\end{equation}
where $\{X[t]\}$ are the channel inputs which are subject to a unit average power constraint, $\{G[t]\}$ are the power gains of the channel fading which we assume to be be \emph{unknown} to the transmitter but known at the receiver, $\{Z[t]\}$ are the additive white circularly symmetric complex Gaussian noise with zero means and unit variances, and $\{Y[t]\}$ are the channel outputs. As often done in the literature, we consider the so-called \emph{block-fading} model \cite[Ch.~5.4.5]{TV-B05} where $\{G[t]\}$ are assumed to be \emph{constant} within each coherent block and change \emph{independently} across different blocks according to a known distribution $F_G(\cdot)$. The coherent time of the channel is assumed to be large so that the additive noise $\{Z[t]\}$ can be ``averaged out" within each coherent block. Since both the power constraint and the noise variances are normalized to one, the power gain $G[t]$ also represents the instantaneous \emph{receive} signal-to-noise ratio of the channel.

The focus of this paper is on \emph{delay-limited} communication for which communication is only allowed to span (at most) a total of $L$ coherent blocks where $L$ is a \emph{finite} integer. In this setting, the Shannon capacity is a very pessimistic measure as it is dictated by the \emph{worst} realization of the power-gain process and hence equals zero when the realization of the power gain can be arbitrarily close to zero. An often-adopted measure in the literature is the \emph{expected capacity} \cite{Sha-ISIT97,SS-IT03,EG-ISIT98,EGL-IT10,VS-IT10}, which is defined as the maximum \emph{expected} reliably decoded rate where the expectation is over the distribution of the power-gain process.

The problem of characterizing the expected capacity is closely related to the problem of broadcasting over linear Gaussian channels \cite{Sha-ISIT97,SS-IT03,EG-ISIT98,EGL-IT10,VS-IT10}. The case with $L=1$ represents the most stringent delay requirement known as \emph{slow fading} \cite[Ch.~5.4.1]{TV-B05}. For slow-fading channels, the problem of characterizing the expected capacity is equivalent to the problem of characterizing the capacity region of a \emph{scalar} Gaussian broadcast channel, which is well understood based on the classical works of Cover \cite{Cov-IT72} and Bergmans \cite{Ber-IT74},  and then finding an optimal rate allocation based on the power-gain distribution. For $L>1$, the expected capacity can be improved by treating each realization of the power-gain process as a user in an \emph{$L$-parallel} Gaussian broadcast channel and coding the information bits across different sub-channels \cite{SS-IT03,WY-IT06,Ste-Tech06}. In the limit as $L \rightarrow \infty$, by the ergodicity of the power-gain process each ``typical" realization of the power-gain process can support a reliable rate of communication which is arbitrarily close to 
\begin{equation}
C_{erg}(F_G)=\mathbb{E}_G\left[\log(1+G)\right].
\label{eq:Ce}
\end{equation}
Thus, $C_{erg}(F_G)$ is both the Shannon capacity (appropriately known as the \emph{ergodic capacity} \cite[Ch.~5.4.5]{TV-B05}) and the expected capacity in the limit as $L \rightarrow \infty$.

Formally, let us denote by $C_{exp}(F_G,L)$ the expected capacity of the block-fading channel \eqref{eq:Ch} for which the power-gain distribution is $F_G(\cdot)$ and communication is allowed to span (at most) a total of $L$ coherent blocks. Then, as mentioned previously, the expected capacity $C_{exp}(F_G,L) \rightarrow C_{erg}(F_G)$ in the limit as $L \rightarrow \infty$. As such, the ``gap" between the ergodic capacity $C_{erg}(F_G)$ and the expected capacity $C_{exp}(F_G,L)$ represents a fundamental measure of the penalty incurred by imposing a delay constraint of $L$ coherent blocks. Such gaps, naturally, would depend on the underlying power-gain distribution. In this paper, we are interested in characterizing the \emph{worst-case} gaps over all possible power-gain distributions (including both the power-gain realizations and the probabilities for each realization) with a fixed number of different possible realizations of the power gain in each coherent block. 

More specifically, for the block-fading channel \eqref{eq:Ch} with the power-gain distribution $F_G(\cdot)$, let us define the \emph{additive} and the \emph{multiplicative} gap between the ergodic capacity and the expected capacity under the delay constraint of $L$ coherent blocks as
\begin{equation}
A(F_G,L) :=C_{erg}(F_G)-C_{exp}(F_G,L)
\end{equation}
and
\begin{equation}
M(F_G,L) :=\frac{C_{erg}(F_G)}{C_{exp}(F_G,L)}
\end{equation}
respectively. Focusing on the slow-fading scenario ($L=1$), we have the following precise characterization of the \emph{worst-case} additive and multiplicative gaps between the ergodic capacity and the expected capacity.

\begin{theorem}\label{thm:main}
\begin{equation}
\sup_{F_G} A(F_G,1) = \log K
\label{eq:AG}
\end{equation}
and
\begin{equation}
\sup_{F_G} M(F_G,1) = K
\label{eq:MG}
\end{equation}
where the supremes are over all power-gain distribution $F_G(\cdot)$ with $K$ different possible realizations of the power gain in each coherent block.
\end{theorem}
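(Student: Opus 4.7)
First I would reformulate as a broadcasting problem: for $L=1$ the expected capacity equals the weighted-sum-rate capacity of a scalar Gaussian broadcast channel whose $K$ virtual users have receive SNRs $g_1<g_2<\cdots<g_K$ (the ordered support of $G$) and decoding weights $P_j:=\Pr(G\geq g_j)=\sum_{k\geq j}p_k$. Superposition coding with successive decoding gives
\[
C_{exp}(F_G,1)=\max_{\alpha}\sum_{j=1}^K P_j\log\!\left(1+\frac{g_j\alpha_j}{1+g_j\sum_{i>j}\alpha_i}\right),
\]
the maximum being over $\alpha_j\geq 0$ with $\sum_j\alpha_j=1$. This turns both gaps into statements about a tractable finite-dimensional optimization.

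The multiplicative bound $\sup_{F_G}M(F_G,1)\leq K$ then follows from the simplest sub-strategy: putting all power on a single layer $k$ achieves $P_k\log(1+g_k)$, so taking the best $k$ and comparing with the average,
\[
C_{exp}(F_G,1)\geq\max_k P_k\log(1+g_k)\geq\tfrac{1}{K}\sum_k P_k\log(1+g_k)\geq\tfrac{1}{K}\sum_k p_k\log(1+g_k)=\tfrac{1}{K}C_{erg}(F_G),
\]
using $P_k\geq p_k$. The additive bound $\sup_{F_G}A(F_G,1)\leq\log K$ is the technical heart: single-user strategies alone are too weak (their gap grows like $\Theta(\log\rho)$ on the extremal example below), and genuine $K$-layer superposition is needed. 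Writing $\beta_j:=\sum_{i\geq j}\alpha_i$, I would take $\beta_j$ to be the KKT-optimal solution of the weighted-sum-rate program, $P_jg_j/(1+g_j\beta_j)=P_{j-1}g_{j-1}/(1+g_{j-1}\beta_j)$ at interior layers (truncated at the endpoints $\beta_j\in[\beta_{j+1},1]$), whereupon Abel summation rewrites the gap as
\[
C_{erg}-C_{exp}=\sum_{j=1}^K P_j\log\frac{(1+g_j)(1+g_j\beta_{j+1})}{(1+g_j\beta_j)(1+g_{j-1})}.
\]
Individual summands are not bounded by $\log K$ — the $j{=}1$ term alone can grow as $\log(1/p_1)$ — so the argument must exploit an aggregate cancellation: under KKT optimality the positive spreading contribution from the lower layers telescopes against the layer-leakage contribution $\log((1+g_j\beta_{j+1})/(1+g_{j-1}))$ from the higher layers, and the elementary inequality $\log((1+g)/(1+g\beta))\leq\log(1/\beta)$ for $\beta\in(0,1]$ controls what remains.

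For tightness I would exhibit explicit extremal sequences. The additive bound is shown sharp by $p_k=1/K$ and $g_k=\rho^k$ as $\rho\to\infty$: a direct KKT calculation gives $\beta_j\sim(K-j+1)/\rho^{j-1}$, and substitution into the rate formulas yields $C_{erg}\approx\tfrac{K+1}{2}\log\rho$ and $C_{exp}\approx\tfrac{K+1}{2}\log\rho-\log K$, so $A\to\log K$. For the multiplicative bound I would use a near-degenerate geometric distribution $p_k\propto(1-\epsilon)\epsilon^{k-1}$ combined with gains $g_k$ chosen so that $P_k\log(1+g_k)=c$ is constant in $k$, then take the joint limit $c\to 0$ (low SNR) and $\epsilon\to 0$ with $c\ll\epsilon$; the simplex constraint $\alpha_j\leq 1$ binds, the single-user strategy becomes asymptotically optimal for $C_{exp}$, and $C_{erg}/C_{exp}\to K$.

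The step most likely to demand delicate bookkeeping is the aggregate bound in the additive proof. Naive ansatzes such as $\beta_j=1/g_{j-1}$ yield only $A\leq(K-1)\log 2$, and any approach that bounds the Abel sum termwise — rather than exploiting the cross-layer KKT cancellation — will similarly overshoot $\log K$ for $K\geq 3$.
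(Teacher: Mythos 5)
Your multiplicative converse is not only correct but is genuinely simpler than the paper's. You lower-bound $C_{exp}(F_G,1)$ by the best single-layer strategy: allocating all power to a message decodable at gain $g_k$ gives rate $\log(1+g_k)$ with probability $F_k:=\Pr(G\geq g_k)\geq p_k$, hence $C_{exp}\geq\max_k F_k\log(1+g_k)\geq\tfrac1K\sum_k F_k\log(1+g_k)\geq\tfrac1K C_{erg}$, so $M\leq K$. The paper instead derives a closed form for $C_{exp}$ via marginal-utility functions and the optimal power allocation (Propositions~\ref{prop:DMUF}--\ref{prop:Ce}), then proves $p_k\log(1+g_k)\leq C_{exp}$ through a four-case argument in Appendix~\ref{App:MG}. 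Your one-line max-vs-average inequality replaces all of that; the only thing it does not supply is the explicit formula for $C_{exp}$, which the paper reuses for the additive bound and for the extremal limits.

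The additive converse is where your plan falls short. You assert that \emph{any} termwise bound on your Abel decomposition $A=\sum_j P_j\log\frac{(1+g_j)(1+g_j\beta_{j+1})}{(1+g_{j-1})(1+g_j\beta_j)}$ must overshoot $\log K$, and that the right mechanism is an aggregate cross-layer cancellation controlled by $\log\frac{1+g}{1+g\beta}\leq\log\frac1\beta$. But the paper's proof \emph{is} termwise; it just groups by $p_k$ rather than by $P_j$. Writing $A=\sum_k p_k\log\frac{n_k+1}{n_k\Lambda_k}$ with $\Lambda_k$ given by \eqref{eq:Ce2}, the paper shows $\frac{n_k+1}{n_k\Lambda_k}\leq\frac1{p_k}$ for each $k$ (Lemma~\ref{lemma:AG}), whence $A\leq\sum_k p_k\log\frac1{p_k}=H(F_G)\leq\log K$. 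This entropy bound is the key idea you are missing, and it hinges on first having an explicit characterization of the optimal $(\beta_1^*,\dots,\beta_K^*)$ via the dominating MUF (Proposition~\ref{prop:OptPow}): the three cases of Lemma~\ref{lemma:AG} use precisely the interlacing structure $z_{\pi_{s-1},\pi_s}\leq 0$, $z_{\pi_{m-1},\pi_m}\leq z_{\pi_{m-1},k}$, $z_{\pi_w,k}\geq 1$ that the single-crossing-point analysis produces. A bare "KKT optimality + telescoping" plan does not get you to this inequality, and the inequality $\log\frac{1+g}{1+g\beta}\leq\log\frac1\beta$ you propose does not by itself deliver anything sharper than $O(K)$. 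You would essentially have to re-derive the MUF structure to make your plan work.

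Your extremal sequences are essentially the paper's up to an index reversal (paper: $g_k\approx d^{K-k+1}$ decreasing, uniform $p_k$; paper: $n_k=\sum_{j\leq k}d^j$, $p_k\propto d^k$). Two small cautions. In the additive example your ansatz $\beta_j\sim(K-j+1)/\rho^{j-1}$ gives $\beta_1\sim K>1$, violating the unit-power constraint; the correct boundary value is $\beta_1=1$, and this truncation is what produces the $\log K$ gap (the rates at layers $j\geq2$ are all $\approx\log\rho$ with no deficiency, while layer $1$ loses exactly $P_1\log K$). In the multiplicative example, the condition $P_k\log(1+g_k)=c$ plus $p_k/P_k\to1$ is the same device the paper uses implicitly (there $n_k/F_k$ is constant so all MUF crossings collapse to $z=0$), but the claim that the single-user strategy is asymptotically optimal for $C_{exp}$ in the joint limit $c\ll\epsilon\to0$ needs to be argued; the paper sidesteps this by computing the exact $\Lambda_k$ from the degenerate power allocation and sandwiching with Lemma~\ref{lemma:MG}. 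Finally, both converses need a small separate argument for the boundary case $g_K=0$, which the paper handles by perturbing $g_K$ to $\epsilon>0$; your outline omits this.
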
 

The above results have both positive and negative engineering implications, which we summarize below.
\begin{itemize}
\item On the positive side, note that both the ergodic capacity $C_{erg}(F_G)$ and the expected capacity $C_{exp}(F_G,1)$ will grow \emph{unboundedly} in the limit as the realizations of the power gain all tend to infinity. The difference between them, however, will remain \emph{bounded} for any \emph{finite-state} fading channels (where $K$ is \emph{finite}). Similarly, both the ergodic capacity $C_{erg}(F_G)$ and the expected capacity $C_{exp}(F_G,1)$ will \emph{vanish} in the limit as the realizations of the power gain all tend to zero. However, the expected capacity $C_{exp}(F_G,1)$ (under the most stringent delay constraint of $L=1$ coherent block) can account, at least, for a \emph{non-vanishing} fraction of the ergodic capacity $C_{erg}(F_G)$. 
\item On the negative side, in the worst-case scenario both the additive gap $A(F_G,1)$ and the multiplicative gap $M(F_G,1)$ will grow \emph{unboundedly} in the limit as the number of different realizations of the power gain in each coherent block $K$ tends to infinity. Therefore, when $K$ is large, delay-limited communication may incur a large expected-rate loss relative to the ergodic scenario where there is no delay constraint on communication. For \emph{continuous-fading} channels where the sample space of $F_G(\cdot)$ is infinite and uncountable, it is possible that the expected-rate loss incurred by the delay constraint is \emph{unbounded}. 
\end{itemize}

One, however, should not be \emph{overly} pessimistic when attempt to interpret the worst-case gap results \eqref{eq:AG} and \eqref{eq:MG}. First, the above worst-case gap results are derived under the assumption that the transmitter does not know the realization of the channel fading at all. In practice, however, it is entirely possible that some information on the channel fading realization is made available to the transmitter (via finite-rate feedback, for example). This information can be potentially used to reduce the gap between the ergodic capacity and the expected capacity (over finite-block delay) \cite{SS-TWireless08a,SS-TWireless08b}. Second, for specific fading distributions the gap between the ergodic capacity and the expected capacity can be much smaller. For example, it is known \cite{SS-IT03} that for Rayleigh fading, the additive gap between the ergodic capacity and the expected capacity over one-block delay is only $1.649$ nats per channel use in the high receive signal-to-noise ratio limit, and the multiplicative gap is only $1.718$ in the low receive signal-to-noise ratio limit, even though in this case the power-gain distribution is continuous.

The rest of the paper is organized as follows. Next in Sec.~\ref{sec:Pf}, we provide a proof of the worst-case gap results \eqref{eq:AG} and \eqref{eq:MG} as stated in Theorem~\ref{thm:main}. Key to our proof is an explicit characterization of an optimal power allocation for characterizing the expected capacity $C_{exp}(F_G,1)$, obtained via the marginal utility functions introduced by Tse \cite{Tse-Tech99}. In Sec.~\ref{sec:FP}, we extend our setting from the point-to-point fading channel to the problem of writing on fading paper \cite{BB-IT08,BZ-ISIT06,ZKL-ISIT07}, and provide a characterization of the ergodic capacity and the additive expected-capacity loss over one-block delay to within one bit per channel use. Finally, in Sec.~\ref{sec:Con} we conclude the paper with some remarks.

\emph{Note}: In this paper, all logarithms are taken based on the natural number $e$.

\section{Proof of the Main Results} \label{sec:Pf}
\subsection{Optimal Power Allocation via Marginal Utility Functions}
To prove the worst-case gap results \eqref{eq:AG} and \eqref{eq:MG} as stated in Theorem~\ref{thm:main}, let us fix the transmit signal-to-noise ratio $1$ and the power-gain distribution $F_G(\cdot)$ with $K$ different possible realizations of the power gain in each coherent block. Let $\{g_1,\ldots,g_K\}$ be the collection of the possible realizations of the power gain, and let $p_k:=\mathrm{Pr}(G=g_k)>0$. Without loss of generality, let us assume that the possible realizations of the power gain are ordered as
\begin{equation}
g_1 > g_2  > \cdots > g_K \geq 0. \label{eq:Ord1}
\end{equation}
With the above notations, the expected capacity $C_{exp}(F_G,1)$ (under the delay constraint of $L=1$ coherent block) is given by \cite{SS-IT03}
\begin{equation}
\begin{array}{rl}
\max_{(\beta_1,\ldots,\beta_K)} & \sum_{k=1}^{K}F_k\log\left(\frac{1+\beta_kg_k}{1+\beta_{k-1}g_k}\right)\\
\mbox{subject to} & 0 = \beta_0 \leq \beta_1 \leq \beta_2 \leq \cdots \leq \beta_K \leq 1
\end{array}
\label{eq:P1}
\end{equation}
where 
\begin{equation}
F_k:=\sum_{j=1}^{k}p_j.
\end{equation}

Note that the  optimization program \eqref{eq:P1} with respect to the cumulative power vector $(\beta_1,\ldots,\beta_K)$ is \emph{not} convex. However, the program can be convexified via the following simple change of variable \cite{TSSD-IT08,Tse-Tech99}
\begin{equation}
r_k:=\log\left(\frac{1+\beta_kg_k}{1+\beta_{k-1}g_k}\right), \quad k=1,\ldots,K.
\end{equation}
In the preliminary version of this work \cite{YLS-ISIT12}, this venue was further pursued to obtain an \emph{implicit} characterization of the optimal power allocation via the standard Karush-Kuhn-Tucker conditions. Below we shall consider an alternative and  more direct approach which provides an \emph{explicit} characterization of an optimal power allocation via the \emph{marginal utility functions (MUFs)} introduced by Tse \cite{Tse-Tech99}.

Assume that $g_K>0$ (which implies that $g_k>0$ for all $k=1,\ldots,K$), and let $n_k:=1/g_k$ for $k=1,\ldots,K$. Given the assumed ordering \eqref{eq:Ord1} for the power-gain realizations $\{g_1,\ldots,g_K\}$, we have
\begin{equation}
0 < n_1 < \cdots < n_K.
\label{eq:Ord2}
\end{equation}
Following \cite{Tse-Tech99}, let us define the MUFs and the \emph{dominating} MUF as
\begin{equation}
u_k(z) := \frac{F_k}{n_k+z}, \quad k=1,\ldots,K
\end{equation}
and 
\begin{equation}
u^*(z) :=\max_{k=1,\ldots,K}u_k(z)
\end{equation}
respectively. Note that for any $k=1,\ldots,K$, $u_k(z)>0$ if and only if $z>-n_k$. Also, for any two distinct integers $k$ and $l$ such that $1 \leq k < l \leq K$ the MUFs $u_k(z)$ and $u_l(z)$ has a unique intersection at $z=z_{k,l}$ where
\begin{equation}
\frac{F_k}{n_k+z_{k,l}}=\frac{F_l}{n_l+z_{k,l}} \quad \Longleftrightarrow \quad z_{k,l} = \frac{F_kn_l-F_ln_k}{F_l-F_k}.
\label{eq:Int}
\end{equation}
Note that $F_k<F_l$ and $n_k < n_l$, so we have $z_{k,l} > -n_k$. Furthermore, it is straightforward to verify that $u_k(z) > u_l(z)>0$ if and only if $-n_k<z < z_{k,l}$, and $u_l(z) > u_k(z) >0$ if and only if $z > z_{k,l}$ (see Fig.~\ref{fig:SCP} for an illustration). For the rest of the paper, the above property will be frequently referred to as the \emph{single crossing point} property of the MUFs. 

\begin{figure}[t!]
\centering
\includegraphics[width=0.9\linewidth,draft=false]{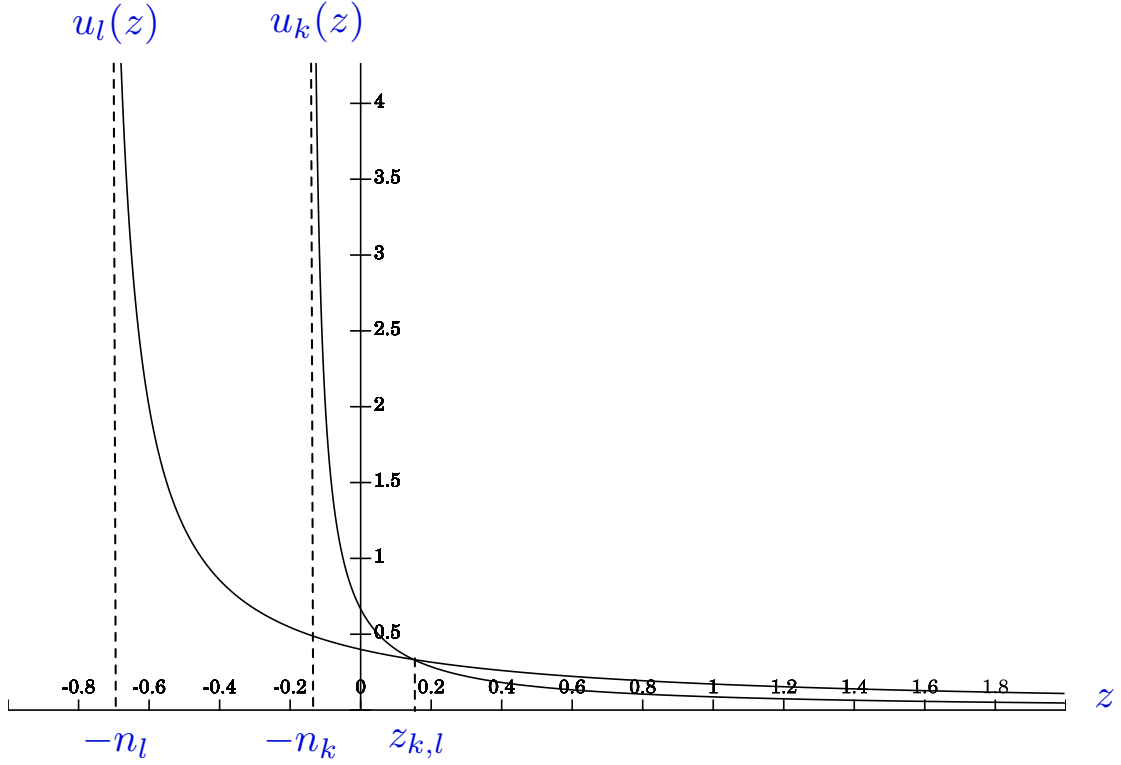}
\caption{The single crossing point property between the MUFs $u_k(z)$ and $u_l(z)$ for $k<l$.}
\label{fig:SCP}
\end{figure}

We emphasize here that the aforementioned single crossing point property relies on the fact that both sequences $\{n_k\}$ and $\{F_k\}$ increase monotonically with the subscript $k$. Since this particular ordering was not specifically considered in the MUFs defined in \cite[Eq.~(7)]{Tse-Tech99}, next, instead of building on the results from \cite{Tse-Tech99}, we shall borrow the concept of MUF and establish our results from first principles. Let us begin by defining a sequence of integers $\{\pi_1,\ldots,\pi_I\}$ recursively as follows. 

\begin{defi}
First, let $\pi_1=1$. Then, define
\begin{equation}
\pi_{i+1} := \max\left[\arg\min_{l=\pi_i+1,\ldots,K}z_{\pi_i,l}\right], \quad i=1,\ldots,I-1
\label{eq:RecDef}
\end{equation}
where $I$ is the total number of integers $\{\pi_i\}$ defined through the above recursive procedure. 
\end{defi}

Note that in the above definition, a ``$\max$" is used to break the ties for achieving the ``$\min$" inside the brackets, so there is no ambiguity in defining the integer sequence $\{\pi_1,\ldots,\pi_I\}$. Clearly, we have
\begin{equation}
1 = \pi_1 < \pi_2 < \cdots < \pi_I=K.
\end{equation}
Furthermore, we have the following properties for the sequence $\{z_{\pi_1,\pi_2},z_{\pi_2,\pi_3},\ldots,z_{\pi_{I-1},\pi_I}\}$, which are direct consequences of the recursive definition \eqref{eq:RecDef} and the single crossing point property of the MUFs.
\begin{lemma} \label{lemma:Mon} 
1) For any $i=1,\ldots,I-1$ and any $l =\pi_i+1,\ldots,K$, we have 
\begin{equation}
z_{\pi_{i},\pi_{i+1}} \leq z_{\pi_i,l}.
\label{eq:Mon1}
\end{equation}
2) For any $i=1,\ldots,I-2$, we have
\begin{equation}
z_{\pi_{i},\pi_{i+1}} \leq z_{\pi_{i+1},\pi_{i+2}}.
\label{eq:Mon2}
\end{equation}
3) For any $i=1,\ldots,I-1$ and any $l =1,\ldots,\pi_{i+1}-1$, we have 
\begin{equation}
z_{\pi_{i},\pi_{i+1}} \geq z_{l,\pi_{i+1}}.
\label{eq:Mon3}
\end{equation}
\end{lemma}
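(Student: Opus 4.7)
The plan is to prove the three inequalities in order, since parts 1 and 2 feed naturally into part 3.

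Part 1 should follow immediately from the recursive definition \eqref{eq:RecDef}: $\pi_{i+1}$ is selected as the (largest) minimizer of $l \mapsto z_{\pi_i, l}$ over $l \in \{\pi_i+1, \ldots, K\}$, so by construction no such $l$ can give a value of $z_{\pi_i, l}$ smaller than $z_{\pi_i, \pi_{i+1}}$.

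For part 2, I would exploit the single crossing point property twice at the point $z_\star := z_{\pi_i, \pi_{i+1}}$. By part 1 (applied with $l = \pi_{i+2}$, which is a legal choice since $\pi_{i+2} > \pi_i$) we have $z_\star \leq z_{\pi_i, \pi_{i+2}}$. Because $\pi_i < \pi_{i+2}$, the single crossing point property between $u_{\pi_i}$ and $u_{\pi_{i+2}}$ then yields $u_{\pi_i}(z_\star) \geq u_{\pi_{i+2}}(z_\star)$; combining this with $u_{\pi_i}(z_\star) = u_{\pi_{i+1}}(z_\star)$ gives $u_{\pi_{i+1}}(z_\star) \geq u_{\pi_{i+2}}(z_\star)$. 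A second invocation of the single crossing point property, this time between $u_{\pi_{i+1}}$ and $u_{\pi_{i+2}}$ (legal since $\pi_{i+1} < \pi_{i+2}$), translates this back into $z_\star \leq z_{\pi_{i+1}, \pi_{i+2}}$, as required.

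Part 3 I would prove by induction on $i$, splitting $l \in \{1, \ldots, \pi_{i+1}-1\}$ into two regimes. If $\pi_i \leq l < \pi_{i+1}$, part 1 gives $z_{\pi_i, \pi_{i+1}} \leq z_{\pi_i, l}$, and two applications of single crossing (first between $u_{\pi_i}$ and $u_l$ to convert this into $u_l(z_{\pi_i, \pi_{i+1}}) \leq u_{\pi_{i+1}}(z_{\pi_i, \pi_{i+1}})$, then between $u_l$ and $u_{\pi_{i+1}}$ to convert back) yield the claim. If instead $1 \leq l < \pi_i$, the inductive hypothesis supplies $z_{l, \pi_i} \leq z_{\pi_{i-1}, \pi_i}$ and part 2 supplies $z_{\pi_{i-1}, \pi_i} \leq z_{\pi_i, \pi_{i+1}}$; chaining these and again converting via single crossing (between $u_l$ and $u_{\pi_i}$, then between $u_l$ and $u_{\pi_{i+1}}$) finishes the induction.

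The main obstacle I anticipate is simply the careful bookkeeping of which direction the single crossing point property is being used: each time one translates an inequality on $z$-values into an inequality on $u$-values or vice versa, one has to track whether we are ``to the left'' or ``to the right'' of the crossing. Once this is set up consistently and parts 1 and 2 are in place, part 3 is a clean induction requiring no further computation.
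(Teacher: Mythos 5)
Your proposal is correct and takes essentially the same approach as the paper: part 1 from the recursive definition, and parts 2 and 3 via repeated application of the single crossing point property. The only differences are presentational --- you prove part 2 directly rather than by contradiction, and you phrase part 3's second regime as an induction on $i$ rather than the paper's explicit chaining of MUF inequalities across the intermediate indices $\pi_{m+1},\ldots,\pi_{i+1}$; both are equivalent in content.
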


\begin{proof}
Property 1) follows directly from the recursive definition \eqref{eq:RecDef}.

To prove property 2), let us consider proof by contradiction. Assume that $z_{\pi_{i},\pi_{i+1}} > z_{\pi_{i+1},\pi_{i+2}}$ for some $i \in \{1,\ldots,I-2\}$. By property 1), we have $z_{\pi_{i},\pi_{i+2}} \geq z_{\pi_{i},\pi_{i+1}} > z_{\pi_{i+1},\pi_{i+2}}$. Following the single crossing point property, we have $0 < u_{\pi_{i+1}}(z_{\pi_i,\pi_{i+2}}) < u_{\pi_{i+2}}(z_{\pi_i,\pi_{i+2}})=u_{\pi_i}(z_{\pi_i,\pi_{i+2}})$. Using again the single crossing point property, we may conclude that $-n_{\pi_i}<z_{\pi_i,\pi_{i+2}}<z_{\pi_i,\pi_{i+1}}$. But this contradicts the fact that $z_{\pi_i,\pi_{i+2}} \geq z_{\pi_i,\pi_{i+1}}$ as mentioned previously. This proves that for any $i=1,\ldots,I-2$, we must have $z_{\pi_{i},\pi_{i+1}} \leq z_{\pi_{i+1},\pi_{i+2}}$.

To prove property 3), let us fix $i\in \{1,\ldots,I-1\}$. Note that the desired inequality \eqref{eq:Mon3} holds trivially with equality for $l=\pi_i$, so we only need to consider the cases where $l \in \{\pi_i+1,\ldots,\pi_{i+1}-1\}$ and $l \in \{1,\ldots,\pi_i-1\}$. 

For the case where $l \in \{\pi_i+1,\ldots,\pi_{i+1}-1\}$, by property 1) we have $-n_{\pi_i} < z_{\pi_i,\pi_{i+1}} \leq z_{\pi_i,l}$. Following the single crossing point property we have $0 < u_l(z_{\pi_{i},\pi_{i+1}}) \leq u_{\pi_i}(z_{\pi_{i},\pi_{i+1}})=u_{\pi_{i+1}}(z_{\pi_{i},\pi_{i+1}})$, which in turn implies that $z_{\pi_{i},\pi_{i+1}} \geq z_{l,\pi_{i+1}}$. 

For the case where $l \in \{1,\ldots,\pi_i-1\}$, let us assume, without loss of generality, that $l \in \{\pi_m,\ldots,\pi_{m+1}-1\}$ for some $m \in \{1,\ldots,i-1\}$. By the previous case we have $z_{\pi_{m},\pi_{m+1}} \geq z_{l,\pi_{m+1}} $ and hence 
\begin{equation}
0 < u_l(z) \leq u_{\pi_{m+1}}(z) \quad \forall z\geq z_{\pi_m,\pi_{m+1}}.
\label{eq:DMUF1}
\end{equation}
Also note that 
\begin{equation}
u_{\pi_{m+1}}(z) \leq u_{\pi_{m+2}}(z) \leq \cdots \leq u_{\pi_{i+1}}(z) \quad \forall z \geq \max_{m+1 \leq j \leq i}z_{\pi_j,\pi_{j+1}}.
\label{eq:DMUF2}
\end{equation}
By property 2) we have
\begin{equation}
\max_{m+1 \leq j \leq i}z_{\pi_j,\pi_{j+1}}=z_{\pi_{i},\pi_{i+1}} \geq z_{\pi_m,\pi_{m+1}}.
\label{eq:DMUF3}
\end{equation}
Combining \eqref{eq:DMUF1}--\eqref{eq:DMUF3} gives $0 < u_l(z_{\pi_{i},\pi_{i+1}}) \leq u_{\pi_{i+1}}(z_{\pi_{i},\pi_{i+1}})$, which in turn implies that $z_{\pi_{i},\pi_{i+1}} \geq z_{l,\pi_{i+1}}$.

Combing the above two cases completes the proof of property 3) and hence the entire lemma.
\end{proof}

\begin{figure}[t!]
\centering
\includegraphics[width=0.9\linewidth,draft=false]{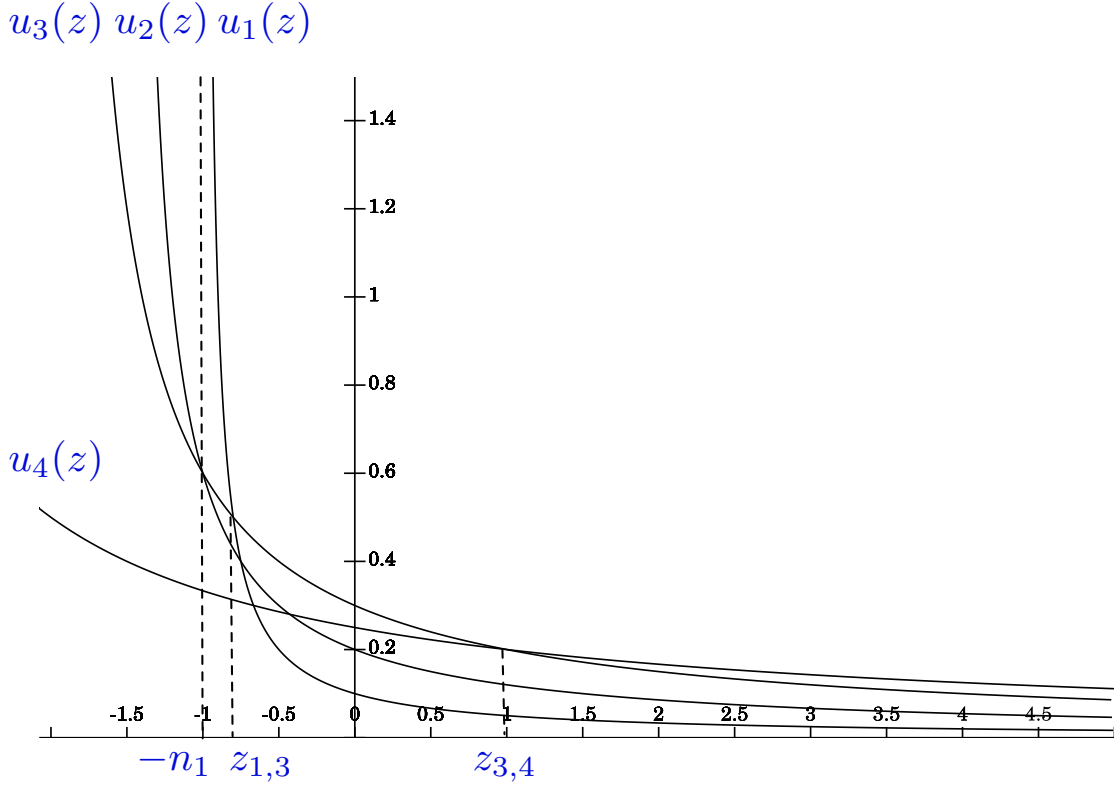}
\caption{An illustration of the dominating MUF. In this example, we have $K=4$ and $z_{1,3} < z_{1,2} < z_{1,4}$. Therefore, we have $I=3$, $\pi_1=1$, $\pi_2=3$, and $\pi_3=4$. The dominating MUF $u^*(z)=u_1(z)$ for $z \in (-n_1,z_{1,3})$, $u^*(z)=u_3(z)$ for $z \in (z_{1,3},z_{3,4})$, and $u^*(z)=u_4(z)$ for $z \in (z_{3,4},\infty)$.}
\label{fig:DMUF}
\end{figure}

The following proposition provides an explicit characterization of the dominating MUF (see Fig.~\ref{fig:DMUF} for an illustration).
\begin{prop}[Dominating marginal utility function]\label{prop:DMUF}
For any $i=1,\ldots,I$ and any $z \in (z_{\pi_{i-1},\pi_i},z_{\pi_{i},\pi_{i+1}})$, the dominating MUF
\begin{equation}
u^*(z)=u_{\pi_i}(z).
\end{equation}
where we define $z_{\pi_0,\pi_1}:=-n_1$ and $z_{\pi_I,\pi_{I+1}}:=\infty$ for notational convenience (even though $\pi_0$ and $\pi_{I+1}$ will not be explicitly defined).
\end{prop}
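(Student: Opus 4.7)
The plan is to fix an index $i \in \{1,\ldots,I\}$ and a point $z \in (z_{\pi_{i-1},\pi_i}, z_{\pi_i,\pi_{i+1}})$, and then verify directly that $u_{\pi_i}(z) \geq u_l(z)$ for every $l \in \{1,\ldots,K\}$. The proof splits naturally into three cases based on the relationship between $l$ and $\pi_i$, and in each case the heavy lifting is already done by Lemma~\ref{lemma:Mon} together with the single crossing point property of the MUFs.

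First I would dispose of $l = \pi_i$, which is trivial, and then handle $l > \pi_i$. Here I would invoke property 1) of Lemma~\ref{lemma:Mon}, which gives $z_{\pi_i,\pi_{i+1}} \leq z_{\pi_i,l}$, so that $z < z_{\pi_i,\pi_{i+1}} \leq z_{\pi_i,l}$. To apply the single crossing point property in its ``$u_{\pi_i}(z) > u_l(z) > 0$ iff $-n_{\pi_i} < z < z_{\pi_i,l}$'' form, I still need $z > -n_{\pi_i}$. For $i = 1$ this is by definition ($z_{\pi_0,\pi_1} = -n_1$), and for $i > 1$ a one-line computation from \eqref{eq:Int} gives $z_{\pi_{i-1},\pi_i} + n_{\pi_i} = F_{\pi_i}(n_{\pi_i} - n_{\pi_{i-1}})/(F_{\pi_i} - F_{\pi_{i-1}}) > 0$, so that $z > z_{\pi_{i-1},\pi_i} > -n_{\pi_i}$. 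Combining, the single crossing property yields $u_{\pi_i}(z) > u_l(z)$.

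For $l < \pi_i$ (a vacuous case when $i=1$), I would invoke property 3) of Lemma~\ref{lemma:Mon} with index $i-1$, which gives $z_{\pi_{i-1},\pi_i} \geq z_{l,\pi_i}$. Hence $z > z_{\pi_{i-1},\pi_i} \geq z_{l,\pi_i}$, and the single crossing point property applied to the pair $l < \pi_i$ again delivers $u_{\pi_i}(z) > u_l(z)$. The case $l = \pi_I$ with $i = I$ poses no issue since the upper endpoint is $\infty$ by convention.

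The routine calculations are minimal, and the main (though still modest) obstacle is the bookkeeping for the boundary cases $i=1$ and $i=I$, along with verifying the sign condition $z > -n_{\pi_i}$ needed to invoke the single crossing point property in its positive-value form; both are handled by the short computation above and by the conventions $z_{\pi_0,\pi_1} := -n_1$ and $z_{\pi_I,\pi_{I+1}} := \infty$. Once these are cleared, the three cases together establish $u_{\pi_i}(z) = \max_{l} u_l(z) = u^*(z)$, proving the proposition.
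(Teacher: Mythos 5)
Your proof is correct and follows essentially the same route as the paper's: fix $i$, split into $l>\pi_i$ and $l<\pi_i$, and in each branch combine the relevant monotonicity property from Lemma~\ref{lemma:Mon} (property 1 for $l>\pi_i$, property 3 for $l<\pi_i$) with the single crossing point property to conclude $u_{\pi_i}(z)\ge u_l(z)$ on the interval. The only (very minor) addition is your explicit one-line check that $z_{\pi_{i-1},\pi_i}>-n_{\pi_i}$ for $i>1$; the paper makes the same observation but leans on the remark following \eqref{eq:Int} (where $z_{k,l}>-n_k>-n_l$ was already established) rather than recomputing it.
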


\begin{proof}
Fix $i \in \{1,\ldots,I\}$. Let us show that $u_{\pi_i}(z) \geq u_l(z)$ for any $z \in (z_{\pi_{i-1},\pi_i},z_{\pi_{i},\pi_{i+1}})$ by considering the cases $l>\pi_i$ and $l<\pi_i$ separately.

For $l>\pi_i$, by the single crossing point property we have $0 < u_{l}(z) \leq u_{\pi_i}(z)$ for any $-n_{\pi_i} < z \leq z_{\pi_i,l}$. By property 1) of Lemma~\ref{lemma:Mon}, for any $l>\pi_i$ we have $z_{\pi_i,\pi_{i+1}} \leq z_{\pi_i,l}$. Combined with the fact that $z_{\pi_{i-1},\pi_i} \geq -n_{\pi_i}$ (the equality holds only when $i=1$ by the definition of $z_{\pi_0,\pi_1}$ and the fact that $\pi_1=1$), we may conclude that for $l>\pi_i$, $u_{\pi_i}(z) \geq u_l(z)$ for any $z \in (z_{\pi_{i-1},\pi_i},z_{\pi_{i},\pi_{i+1}}]$.

For $l<\pi_i$, by property 3) of Lemma~\ref{lemma:Mon} we have $z_{\pi_{i-1},\pi_i} \geq z_{l,\pi_i}$ and hence $0 < u_l(z) \leq u_{\pi_i}(z)$ for any $z \geq z_{\pi_{i-1},\pi_i}$.

Combining the above two cases completes the proof of the proposition.
\end{proof}

Now, let $(\beta_1^*,\ldots,\beta_K^*)$ be an \emph{optimal} solution to the optimization program \eqref{eq:P1}. Then, the expected capacity $C_{exp}(F_G,1)$ can be bounded from above using the dominating MUF as follows:
\begin{eqnarray}
C_{exp}(F_G,1) & = & \sum_{k=1}^{K}F_k\log\left(\frac{n_k+\beta_k^*}{n_k+\beta_{k-1}^*}\right)\label{eq:OptPow1}\\
& = & \sum_{k=1}^{K}\int_{\beta_{k-1}^*}^{\beta_k^*}u_k(z)dz\\
& \leq & \sum_{k=1}^{K}\int_{\beta_{k-1}^*}^{\beta_k^*}u^*(z)dz\label{eq:LHY1}\\
& = & \int_{\beta_{0}^*}^{\beta_K^*}u^*(z)dz\\
& \leq & \int_{0}^{1}u^*(z)dz\label{eq:LHY2}
\end{eqnarray}
where \eqref{eq:LHY1} follows from the fact that for any $k=1,\ldots,K$ we have $\beta_{k-1}^* \leq \beta_k^*$ and $u_k(z) \leq u^*(z)$ for all $z$, and \eqref{eq:LHY2} follows from the fact that $\beta_0^*=0$, $\beta_K^* \leq 1$, and $u^*(z) >0 $ for all $z \geq 0$. The equalities hold if $(\beta_1^*,\ldots,\beta_K^*)$ satisfies 
\begin{equation}
u^*(z)=u_k(z) \quad \forall z \in (\beta_{k-1}^*,\beta_{k}^*)
\label{eq:Opt}
\end{equation} 
for any $k=1,\ldots,K$ and $\beta_K^*=1$.

Note that by property 3) of Lemma~\ref{lemma:Mon}, we have
\begin{equation}
-n_1=:z_{\pi_0,\pi_1} < z_{\pi_1,\pi_2} \leq \cdots \leq z_{\pi_{I-1},\pi_I} < z_{\pi_I,\pi_{I+1}}
:=\infty.
\end{equation}
To proceed, let us define two integers $s$ and $w$ as follows.

\begin{defi}\label{def:actFad}
Let $s$ be the \emph{largest} index $i \in \{1,\ldots,I\}$ such that $z_{\pi_{i-1},\pi_i}\leq 0$ and let $w$ be the \emph{largest} index $i \in \{1,\ldots,I\}$ such that $z_{\pi_{i-1},\pi_i} < 1$. 
\end{defi}

Clearly, we have $1 \leq s\leq w \leq I$. Furthermore if $s=w$, we have
\begin{equation}
\cdots \leq z_{\pi_{s-1},\pi_s} \leq 0 < 1 \leq z_{\pi_s,\pi_{s+1}} \leq \cdots
\label{eq:Ord100}
\end{equation}
and if $s<w$, we have 
\begin{equation}
\cdots \leq z_{\pi_{s-1},\pi_s} \leq 0 < z_{\pi_{s},\pi_{s+1}} \leq \cdots \leq  z_{\pi_{w-1},\pi_w} < 1 \leq z_{\pi_w,\pi_{w+1}} \leq \cdots
\label{eq:Ord200}
\end{equation}
Using the definition of $s$ and $w$, we have the following explicit characterization of an optimal power allocation.

\begin{prop}[An optimal power allocation]\label{prop:OptPow}
Assume that $g_K>0$. Then,  an optimal solution $(\beta_1^*,\ldots,\beta_K^*)$ to the optimization program \eqref{eq:P1} is given by
\begin{equation}
\beta_k^*=\left\{
\begin{array}{ll}
0, & \mbox{for} \quad 1 \leq k < \pi_s\\
z_{\pi_i,\pi_{i+1}}, & \mbox{for} \quad \pi_i \leq k < \pi_{i+1} \;\; \mbox{and} \;\; i=s,\ldots,w-1\\
1, & \mbox{for} \quad \pi_w \leq k \leq K.
\end{array}
\right.
\label{eq:OptPow}
\end{equation}
\end{prop}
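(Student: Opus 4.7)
The plan is to verify that the proposed $(\beta_1^*,\ldots,\beta_K^*)$ in \eqref{eq:OptPow} attains the upper bound $\int_0^1 u^*(z)\,dz$ derived in the chain \eqref{eq:OptPow1}--\eqref{eq:LHY2}. Since every feasible vector is bounded above by this quantity, attaining it certifies optimality. Concretely, I would check three things: (a) feasibility, i.e., the proposed sequence is nondecreasing and lies in $[0,1]$; (b) $\beta_K^* = 1$, so that inequality \eqref{eq:LHY2} is tight; and (c) the sufficient condition \eqref{eq:Opt} for \eqref{eq:LHY1} to be tight, namely $u^*(z) = u_k(z)$ on each nonempty interval $(\beta_{k-1}^*,\beta_k^*)$.

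For (a), the proposed $\beta^*$ is a staircase whose distinct values, in order, are $0, z_{\pi_s,\pi_{s+1}}, z_{\pi_{s+1},\pi_{s+2}}, \ldots, z_{\pi_{w-1},\pi_w}, 1$ (with the middle block absent when $s = w$). The internal monotonicity $z_{\pi_i,\pi_{i+1}} \leq z_{\pi_{i+1},\pi_{i+2}}$ is precisely property 2) of Lemma~\ref{lemma:Mon}, while $0 \leq z_{\pi_s,\pi_{s+1}}$ (when $s < w$) and $z_{\pi_{w-1},\pi_w} \leq 1$ follow directly from the choices of $s$ and $w$ in Definition~\ref{def:actFad}. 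Step (b) is immediate since $\pi_w \leq \pi_I = K$.

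The bulk of the proof is (c). The staircase is constant on the blocks of indices $\{\pi_i, \pi_i+1, \ldots, \pi_{i+1}-1\}$, so $(\beta_{k-1}^*,\beta_k^*)$ is nonempty only when $k \in \{\pi_s, \pi_{s+1}, \ldots, \pi_w\}$, where a jump occurs. For each such $k = \pi_i$, I would identify the interval and match it against the domination region $(z_{\pi_{i-1},\pi_i}, z_{\pi_i,\pi_{i+1}})$ given by Proposition~\ref{prop:DMUF}: for $s < i < w$ the interval $(z_{\pi_{i-1},\pi_i}, z_{\pi_i,\pi_{i+1}})$ coincides with the domination region; for $k = \pi_s$ the interval $(0, z_{\pi_s,\pi_{s+1}})$ sits inside $(z_{\pi_{s-1},\pi_s}, z_{\pi_s,\pi_{s+1}})$ because $z_{\pi_{s-1},\pi_s} \leq 0$ by Definition~\ref{def:actFad}; and for $k = \pi_w$ the interval $(z_{\pi_{w-1},\pi_w}, 1)$ sits inside $(z_{\pi_{w-1},\pi_w}, z_{\pi_w,\pi_{w+1}})$ because $z_{\pi_w,\pi_{w+1}} \geq 1$. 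On each such interval, Proposition~\ref{prop:DMUF} gives $u^*(z) = u_{\pi_i}(z) = u_k(z)$, as required. The degenerate case $s = w$ collapses these three sub-cases into the single interval $(0,1)$, which by the same logic lies inside $(z_{\pi_{w-1},\pi_w}, z_{\pi_w,\pi_{w+1}})$.

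The main obstacle is not analytical but combinatorial: correctly identifying that the staircase jumps are confined to $k \in \{\pi_s,\ldots,\pi_w\}$ (so that the other intervals are vacuously compatible with \eqref{eq:Opt}), and verifying that the two boundary jumps at $\pi_s$ and $\pi_w$, together with the degenerate case $s = w$, are exactly handled by the clipping supplied by Definition~\ref{def:actFad}. Once this bookkeeping is laid out, every remaining check reduces to a direct application of Proposition~\ref{prop:DMUF} and Lemma~\ref{lemma:Mon}.
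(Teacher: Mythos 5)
Your proposal is correct and follows essentially the same route as the paper's own proof: verify the sufficient optimality condition \eqref{eq:Opt} on the nonempty intervals, which by the staircase structure are confined to $k = \pi_i$ with $i \in \{s,\ldots,w\}$, and then apply Proposition~\ref{prop:DMUF} together with the boundary bounds from Definition~\ref{def:actFad}, with the degenerate case $s=w$ handled separately. The only addition you make is an explicit feasibility check via Lemma~\ref{lemma:Mon}, which the paper leaves implicit; this is a welcome but minor supplement.
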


\begin{figure}[t!]
\centering
\includegraphics[width=0.9\linewidth,draft=false]{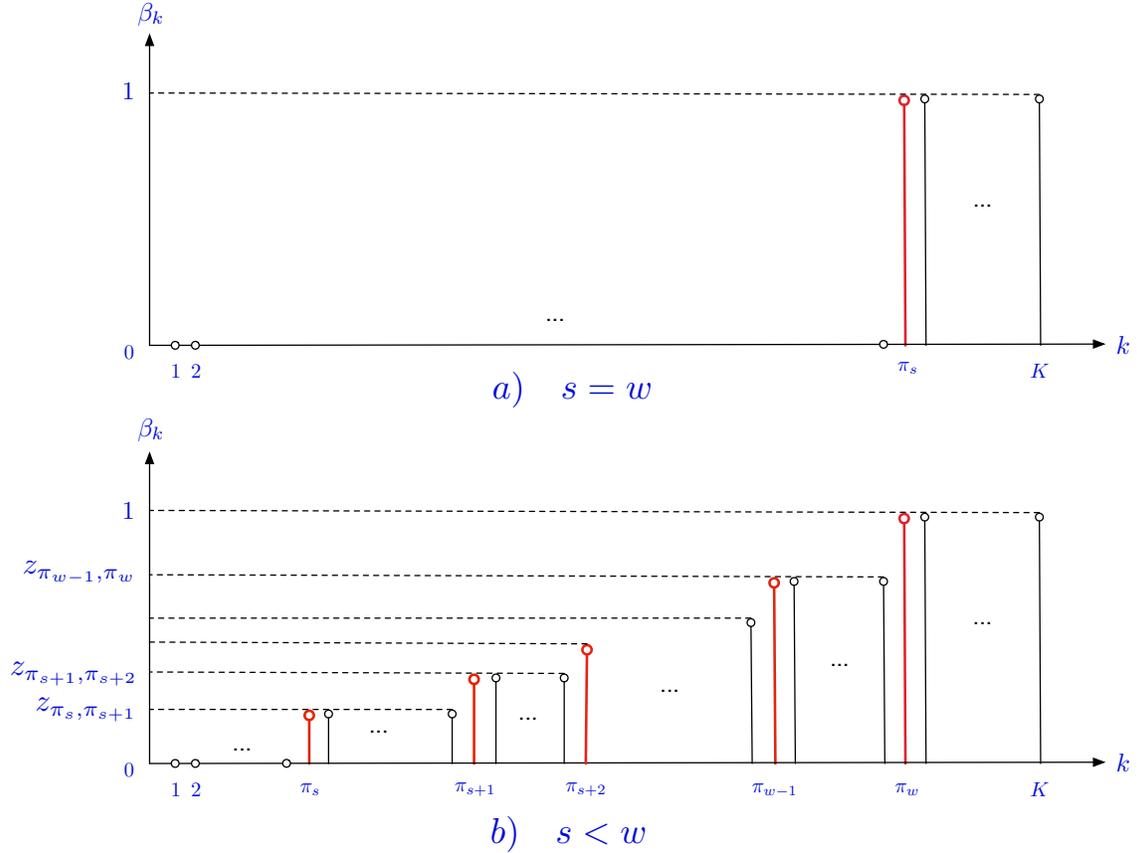}
\caption{An optimal power allocation obtained via the dominating MUF.}
\label{fig:OptPow}
\end{figure}

\begin{proof}
Note that we always have $\beta_K^*=1$. Therefore, in light of the previous discussion, it is sufficient to show that the choice of $(\beta_1^*,\ldots,\beta_K^*)$ as given by \eqref{eq:OptPow} satisfies \eqref{eq:Opt} for any $k=1,\ldots,K$. Also note that for the choice of \eqref{eq:OptPow}, we only need to consider the cases where $k=\pi_i$ for $i=s,\ldots,w$. Otherwise, we have $\beta_{k-1}^*=\beta_k^*$ so the open interval $(\beta_{k-1}^*,\beta_{k}^*)$ is empty and hence there is nothing to prove. 

Let us first assume that $s=w$. In this case, we only need to consider $k=\pi_s$, for which $\beta_{k-1}^*=0$ and $\beta_k^*=1$. By Proposition~\ref{prop:DMUF}, $u^*(z)=u_{\pi_s}(z)$ for any $z \in (z_{\pi_{s-1},\pi_s},z_{\pi_s,\pi_{s+1}})$. By \eqref{eq:Ord100}, $z_{\pi_{s-1},\pi_s}\leq 0$ and $z_{\pi_{s},\pi_{s+1}}\geq 1$. We thus conclude that $u^*(z)=u_{\pi_s}(z)$ for any $z \in (0,1)$.

Next, let us assume that $s<w$. We shall consider the following three cases separately.

Case 1: $k=\pi_s$. In this case, $\beta_{k-1}^*=0$ and $\beta_k^*=z_{\pi_s,\pi_{s+1}}$. By Proposition~\ref{prop:DMUF}, $u^*(z)=u_{\pi_s}(z)$ for any $z \in (z_{\pi_{s-1},\pi_s},z_{\pi_s,\pi_{s+1}})$. By \eqref{eq:Ord200}, $z_{\pi_{s-1},\pi_s}\leq 0$. We thus conclude that $u^*(z)=u_{\pi_s}(z)$ for any $z \in (0,z_{\pi_s,\pi_{s+1}})$.

Case 2: $k=\pi_i$ for some $i\in\{s+1,\ldots,w-1\}$. In this case, $\beta_{k-1}^*=z_{\pi_{i-1},\pi_i}$ and $\beta_k^*=z_{\pi_i,\pi_{i+1}}$. By Proposition~\ref{prop:DMUF}, $u^*(z)=u_{\pi_i}(z)$ for any $z \in (z_{\pi_{i-1},\pi_i},z_{\pi_i,\pi_{i+1}})$.

Case 3: $k=\pi_w$. In this case, $\beta_{k-1}^*=z_{\pi_{w-1},\pi_w}$ and $\beta_k^*=1$. By Proposition~\ref{prop:DMUF}, $u^*(z)=u_{\pi_w}(z)$ for any $z \in (z_{\pi_{w-1},\pi_w},z_{\pi_w,\pi_{w+1}})$. By \eqref{eq:Ord200}, $z_{\pi_w,\pi_{w+1}} \geq 1$. We thus conclude that $u^*(z)=u_{\pi_w}(z)$ for any $z \in (z_{\pi_{w-1},\pi_w},1)$.

We have thus completed the proof of the proposition.
\end{proof}

Note from \eqref{eq:P1} that the power allocated to the fading state $g_k$ is given by $\beta_k-\beta_{k-1}$. Thus for the optimal power allocation given by \eqref{eq:OptPow}, the ``active" fading states $g_k$ that are assigned to nonzero power (i.e., $\beta^*_k>\beta^*_{k-1}$) are $\pi_s,\pi_{s+1},\ldots,\pi_w$, i.e., $g_{\pi_s}$ is the strongest active fading state, and $g_{\pi_w}$ is the weakest active fading state (see Fig.~\ref{fig:OptPow} for an illustration). This provides an operational meaning for the integer sequence $\{\pi_1,\pi_2,\ldots,\pi_I\}$ and the integers $s$ and $w$ defined earlier.

Building on Proposition~\ref{prop:OptPow}, we have the following characterization of the expected capacity $C_{exp}(F_G,1)$, which will play a key role in proving the desired worst-case gap results \eqref{eq:AG} and \eqref{eq:MG}. The proof mainly involves some straightforward calculations and hence is deferred to Appendix~\ref{App:PA} to enhance the flow of the paper.

\begin{prop}[Expected capacity over one-block delay]\label{prop:Ce}
Assume that $g_K>0$ and let
\begin{equation}
\Lambda_k := \left\{
\begin{array}{ll}
\frac{n_{\pi_w}+1}{n_{\pi_s}}\frac{F_{\pi_s}}{F_{\pi_w}}, & \mbox{for} \;\; 1 \leq k \leq \pi_s\\
\frac{n_{\pi_w}+1}{n_{\pi_m}-n_{\pi_{m-1}}}\frac{F_{\pi_m}-F_{\pi_{m-1}}}{F_{\pi_w}}, & \mbox{for} \;\; \pi_{m-1} < k \leq \pi_m \;\; \mbox{and} \quad m=s+1,\ldots,w\\
1 & \mbox{for} \;\; \pi_w < k \leq K.
\end{array}
\right.
\label{eq:Ce2}
\end{equation} 
Then, the expected capacity $C_{exp}(F_G,1)$ can be written as
\begin{align}
C&_{exp}(F_G,1)\nonumber\\
& = \sum_{k=1}^Kp_k\log\Lambda_k\label{eq:Ce1.5}\\
& = F_{\pi_s}\log\left(\frac{F_{\pi_s}}{n_{\pi_s}}\right)+
\sum_{m=s+1}^{w}\left(F_{\pi_m}-F_{\pi_{m-1}}\right)\log\left(\frac{F_{\pi_m}-F_{\pi_{m-1}}}{n_{\pi_m}-n_{\pi_{m-1}}}\right)+F_{\pi_w}\log\left(\frac{n_{\pi_w}+1}{F_{\pi_w}}\right).\label{eq:Ce1.6}
\end{align}
\end{prop}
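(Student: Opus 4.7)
The plan is to evaluate the expected capacity directly by exploiting the integral representation implicit in \eqref{eq:OptPow1}--\eqref{eq:LHY2}. Substituting the optimal allocation from Proposition~\ref{prop:OptPow} into the objective of \eqref{eq:P1} and noting that the inequalities \eqref{eq:LHY1}--\eqref{eq:LHY2} hold with equality by construction, one obtains $C_{exp}(F_G,1)=\int_0^1 u^*(z)\,dz$: the sub-intervals on which $\beta_{k-1}^*=\beta_k^*$ contribute nothing, and on every remaining sub-interval $(\beta_{k-1}^*,\beta_k^*)$ the MUF $u_k$ equals the dominating MUF $u^*$ by design.

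Next I would split $[0,1]$ according to the piecewise description of $u^*$ supplied by Proposition~\ref{prop:DMUF} and Definition~\ref{def:actFad}. Using \eqref{eq:Ord100} when $s=w$ and \eqref{eq:Ord200} when $s<w$, the integration domain decomposes into $(0,z_{\pi_s,\pi_{s+1}})$, the intermediate intervals $(z_{\pi_{i-1},\pi_i},z_{\pi_i,\pi_{i+1}})$ for $i=s+1,\ldots,w-1$, and $(z_{\pi_{w-1},\pi_w},1)$, on each of which $u^*$ coincides with the simple function $u_{\pi_i}(z)=F_{\pi_i}/(n_{\pi_i}+z)$. Each integral then evaluates to a single logarithm, and the intersection formula \eqref{eq:Int} yields the key simplifications
\begin{equation*}
n_{\pi_i}+z_{\pi_{i-1},\pi_i}=\frac{F_{\pi_i}(n_{\pi_i}-n_{\pi_{i-1}})}{F_{\pi_i}-F_{\pi_{i-1}}},\qquad n_{\pi_i}+z_{\pi_i,\pi_{i+1}}=\frac{F_{\pi_i}(n_{\pi_{i+1}}-n_{\pi_i})}{F_{\pi_{i+1}}-F_{\pi_i}}.
\end{equation*}
Plugging these into the sum of logarithms and reindexing produces a telescoping-like cancellation in which the intermediate terms $F_{\pi_i}\log\bigl[(n_{\pi_{i+1}}-n_{\pi_i})/(F_{\pi_{i+1}}-F_{\pi_i})\bigr]$ across successive intervals combine to give $\sum_{m=s+1}^w (F_{\pi_m}-F_{\pi_{m-1}})\log\bigl[(F_{\pi_m}-F_{\pi_{m-1}})/(n_{\pi_m}-n_{\pi_{m-1}})\bigr]$, while the two residual endpoint contributions reduce to $F_{\pi_s}\log(F_{\pi_s}/n_{\pi_s})$ and $F_{\pi_w}\log\bigl[(n_{\pi_w}+1)/F_{\pi_w}\bigr]$. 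This establishes \eqref{eq:Ce1.6}.

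Finally, \eqref{eq:Ce1.5} would follow by regrouping. Since $\Lambda_k$ is piecewise constant on the blocks $\{1,\ldots,\pi_s\}$, $\{\pi_{m-1}+1,\ldots,\pi_m\}$ for $m=s+1,\ldots,w$, and $\{\pi_w+1,\ldots,K\}$ (the last of which contributes zero because $\Lambda_k=1$), the summation $\sum_{k=1}^K p_k\log\Lambda_k$ collapses blockwise via $\sum p_k=F_{\pi_s}$ and $F_{\pi_m}-F_{\pi_{m-1}}$, respectively, while the common factor $\log\bigl[(n_{\pi_w}+1)/F_{\pi_w}\bigr]$ appearing in every $\Lambda_k$ aggregates using the identity $F_{\pi_s}+\sum_{m=s+1}^w(F_{\pi_m}-F_{\pi_{m-1}})=F_{\pi_w}$ to produce exactly the third term of \eqref{eq:Ce1.6}. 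The main obstacle I anticipate is purely notational: cleanly handling the $s=w$ versus $s<w$ cases and tracking the index shifts carefully enough that the telescoping delivers precisely the claimed expression without spurious boundary artifacts.
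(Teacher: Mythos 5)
Your proof is correct, and it takes a genuinely different route from the paper. The paper starts from $C_{exp}(F_G,1)=\sum_{k=1}^K F_k\log\bigl[(n_k+\beta_k^*)/(n_k+\beta_{k-1}^*)\bigr]$ and applies a discrete summation-by-parts (swapping the order of $\sum_j$ and $\sum_k$) to land first on $\sum_k p_k\log\Lambda_k$ with $\Lambda_k$ defined as the telescoping product $\prod_{j=k}^K(n_j+\beta_j^*)/(n_j+\beta_{j-1}^*)$; it then evaluates each such product in closed form using \eqref{eq:OptPow} and the intersection identity, and derives \eqref{eq:Ce1.6} from \eqref{eq:Ce1.5} by a final regrouping. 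You instead go directly to the integral $\int_0^1 u^*(z)\,dz$ — using that Proposition~\ref{prop:OptPow} makes \eqref{eq:LHY1}--\eqref{eq:LHY2} tight — split the integral at the crossing points $z_{\pi_i,\pi_{i+1}}$, evaluate each logarithm with the same two intersection-formula identities, and let the discrete Abel summation show up as a telescoping of the $\log$'s. This gives \eqref{eq:Ce1.6} first, and you recover \eqref{eq:Ce1.5} purely by blockwise aggregation of the piecewise-constant $\Lambda_k$. The calculations are equivalent in content (both hinge on \eqref{eq:Int} and $F_{\pi_s}+\sum_{m=s+1}^w(F_{\pi_m}-F_{\pi_{m-1}})=F_{\pi_w}$) but differently staged: the paper's product-form $\Lambda_k$ makes the per-state decomposition \eqref{eq:Ce1.5} the primitive object, which is convenient later in Lemma~\ref{lemma:AG}; your integral decomposition makes the piecewise structure of $u^*$ and the formula \eqref{eq:Ce1.6} the primitive object, which is arguably the more transparent geometric picture. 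Either way works; just make sure to state the $s=w$ degenerate case explicitly (a single interval $(0,1)$ with $u^*=u_{\pi_s}$, giving $C_{exp}=F_{\pi_s}\log\bigl[(n_{\pi_s}+1)/n_{\pi_s}\bigr]$, which is \eqref{eq:Ce1.6} with an empty middle sum), as you already flagged.
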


\subsection{Two Asymptotic Regimes}
Before we formally prove the worst-case gap results \eqref{eq:AG} and \eqref{eq:MG}, let us first take a look at the nature of the optimal power allocation \eqref{eq:OptPow} in two asymptotic regimes. As we shall see, these analyses provide some insights into why the worst-case additive and multiplicative gaps are $\log{K}$ and $K$, respectively.

Our first asymptotic analysis is in the \emph{high} receive signal-to-noise ratio regime and is motivated by the concept of \emph{generalized degree of freedom} \cite{ZT-IT03,ETW-IT08}. Consider
\begin{align}
g_k = \mathrm{SNR}^{r_k}, \quad k=1,\ldots,K
\end{align}
for some \begin{align}
r_1 > r_2 > \cdots > r_K >0\label{eq:Ord20}
\end{align}
where $\mathrm{SNR}$ can be made arbitrarily large. Fix $\{r_k\}$ and $\{p_k\}$. For sufficiently large $\mathrm{SNR}$, by \eqref{eq:Int} we have
\begin{align}
z_{k,l} &= \frac{F_k\mathrm{SNR}^{-r_l}-F_l\mathrm{SNR}^{-r_k}}{F_l-F_k} \approx \frac{F_k}{F_l-F_k}\mathrm{SNR}^{-r_l}
\label{eq:HighSNR}
\end{align}
for any $1 \leq k < l \leq K$. By the ordering \eqref{eq:Ord20}, we have for sufficiently large $\mathrm{SNR}$
\begin{align}
\frac{F_k}{F_l-F_k}\mathrm{SNR}^{-r_l} < \frac{F_k}{F_{l+1}-F_k}\mathrm{SNR}^{-r_{l+1}}
\end{align}
and hence 
\begin{align}
z_{k,l} < z_{k,l+1}
\end{align}
for any $1 \leq k < l \leq K-1$. By the definition \eqref{eq:RecDef}, we have $I=K$ and $\pi_i=i$ for all $i=1,\ldots,K$. Furthermore, by \eqref{eq:HighSNR} we have $0 < z_{k,l} <1$ for sufficiently large $\mathrm{SNR}$. Hence, by Definition~\ref{def:actFad} we have $s=1$ and $w=K$. We thus conclude that for sufficiently large $\mathrm{SNR}$ all fading states $g_k$, $k=1,\ldots,K$, are active fading states that are assigned to nonzero power. By \eqref{eq:Ce1.6} the expected capacity over one-block delay
\begin{align}
C_{exp}(F_G,1) &=  F_1\log\left(F_1\mathrm{SNR}^{r_1}\right)+
\sum_{m=2}^{K}(F_m-F_{m-1})\log\left(\frac{F_m-F_{m-1}}{\mathrm{SNR}^{-r_m}-\mathrm{SNR}^{-r_{m-1}}}\right)+\notag\\
& \hspace{13pt} F_K\log\left(\frac{\mathrm{SNR}^{-r_K}+1}{F_K}\right)\\
& \approx \left(\sum_{m=1}^{K}p_mr_m\right)\log\mathrm{SNR}+\sum_{m=1}^{K}p_m\log{p_m}
\end{align}
and by \eqref{eq:Ce} the ergodic capacity
\begin{align}
C_{erg}(F_G) &= \sum_{k=1}^{K}p_k\log\left(1+\mathrm{SNR}^{r_k}\right) \approx \left(\sum_{k=1}^{K}p_kr_k\right)\log\mathrm{SNR}
\end{align}
for sufficiently large $\mathrm{SNR}$. Thus, for sufficiently large $\mathrm{SNR}$ the additive gap
\begin{align}
A(F_G,1) \approx -\sum_{m=1}^{K}p_m\log{p_m} =: H(F_G) \leq \log{K}
\end{align}
for any $\{r_k\}$ and $\{p_k\}$, where $H(F_G)$ denotes the entropy of the power-gain distribution $F_G(\cdot)$, and the last inequality follows from the well-known fact that a uniform distribution maximizes the entropy subject to the cardinality constraint. This suggests that the \emph{worst-case} additive gap may be $\log{K}$.

Our second asymptotic analysis is in the \emph{low} receive signal-to-noise ratio regime and is motivated by the concept of \emph{channel capacity per unit cost} \cite{Ver-IT90}. Consider
\begin{align}
g_k = \alpha_k\mathrm{SNR}, \quad k=1,\ldots,K
\end{align}
for some \begin{align}
\alpha_1 > \alpha_2 > \cdots > \alpha_K >0\label{eq:Ord30}
\end{align}
where $\mathrm{SNR}$ can be made arbitrarily close to zero. Fix $\{\alpha_k\}$ and $\{p_k\}$. For sufficiently small $\mathrm{SNR}$, by \eqref{eq:Int} we have
\begin{align}
z_{k,l} &= \frac{F_k\alpha_l^{-1}-F_l\alpha_k^{-1}}{F_l-F_k}\frac{1}{\mathrm{SNR}}
\label{eq:LowSNR}
\end{align}
for any $1 \leq k < l \leq K$. Note that for sufficiently small $\mathrm{SNR}$ we have $z_{k,l}>1$ whenever it is positive. Thus, by Definition~\ref{def:actFad} we have $w=s$, i.e., the only active fading state is $g_{\pi_s}$, for sufficiently small $\mathrm{SNR}$. By \eqref{eq:Ce1.6} the expected capacity over one-block delay
\begin{align}
C_{exp}(F_G,1) &=  F_{\pi_s}\log\left(1+\alpha_{\pi_s}\mathrm{SNR}\right) \approx F_{\pi_s}\alpha_{\pi_s}\mathrm{SNR}
\end{align}
and by \eqref{eq:Ce} the ergodic capacity
\begin{align}
C_{erg}(F_G) &= \sum_{k=1}^{K}p_k\log\left(1+\alpha_k\mathrm{SNR}\right) \approx \left(\sum_{k=1}^{K}p_k\alpha_k\right)\mathrm{SNR}
\end{align}
for sufficiently small $\mathrm{SNR}$. By Lemma~\ref{lemma:Mon} and the fact that $w=s$ we have
\begin{align}
z_{k,\pi_s} \leq z_{\pi_{s-1},\pi_s} \leq 0 < 1 < z_{\pi_s,\pi_{s+1}} \leq z_{\pi_s,l}
\end{align}
for any $1 \leq k < \pi_s < l \leq K$, which implies that
\begin{align}
F_{\pi_s}\alpha_{\pi_s} \geq F_k\alpha_k, \quad \forall k=1,\ldots,K.
\end{align}
Thus, for sufficiently small $\mathrm{SNR}$ the multiplicative gap
\begin{align}
M(F_G,1) & \approx \frac{\sum_{k=1}^{K}p_k\alpha_k}{F_{\pi_s}\alpha_{\pi_s}} \leq \sum_{k=1}^{K}1=K
\end{align}
for any $\{\alpha_k\}$ and $\{p_k\}$, suggesting that the \emph{worst-case} multiplicative gap may be $K$.

\subsection{Additive Gap}
To prove the worst-case additive gap result \eqref{eq:AG}, we shall prove that $\sup_{F_G}A(F_G,1) \leq \log{K}$ and $\sup_{F_G}A(F_G,1) \geq \log{K}$ separately. 

\begin{prop}[Worst-case additive gap, converse part]\label{prop:AG}
For any power-gain distribution $F_G(\cdot)$ with $K$ different realizations of the power gain in each coherent block, we have 
\begin{equation}
A(F_G,1) \leq \log{K}. \label{eq:AG2}
\end{equation}
\end{prop}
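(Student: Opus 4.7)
My plan is to combine Proposition~\ref{prop:Ce} with Jensen's inequality. Using the compact form $C_{exp}(F_G,1) = \sum_k p_k \log \Lambda_k$ from \eqref{eq:Ce1.5} together with $C_{erg}(F_G) = \sum_k p_k \log(1+g_k)$, the additive gap can be written as
$$A(F_G,1) = \sum_{k=1}^K p_k \log\frac{1+g_k}{\Lambda_k}.$$
Concavity of $\log$ then yields $A(F_G,1) \leq \log T$ with $T := \sum_k p_k (1+g_k)/\Lambda_k$, so it suffices to show $T \leq K$.

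To bound $T$, I would split the sum into an active-region contribution from $k \leq \pi_w$ and an inactive-region contribution from $k > \pi_w$ (where $\Lambda_k = 1$). For the active region, the key structural fact is a ``chord dominance'' consequence of the single crossing point property: for $k \in G_m := \{\pi_{m-1}+1,\ldots,\pi_m\}$ with $s < m \leq w$, evaluating $u_k(z) \leq u^*(z)$ at $z = z_{\pi_{m-1},\pi_m}$ (where $u^*$ equals the chord slope $\sigma_m := (F_{\pi_m}-F_{\pi_{m-1}})/(n_{\pi_m}-n_{\pi_{m-1}})$) gives $F_k \leq F_{\pi_{m-1}} + \sigma_m(n_k-n_{\pi_{m-1}})$. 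Subtracting $F_{k-1} \geq F_{\pi_{m-1}}$ yields $p_k \leq \sigma_m(n_k - n_{\pi_{m-1}})$, and the case $k \leq \pi_s$ is handled analogously by evaluating at $z = 0$ to obtain $p_k \leq F_{\pi_s} n_k/n_{\pi_s}$. Plugging these bounds into the explicit $\Lambda_k$ from \eqref{eq:Ce2} and using $n_k \leq n_{\pi_w}$, each active-region term simplifies to $p_k(1+g_k)/\Lambda_k \leq F_{\pi_w}$, so the active-region contribution is at most $\pi_w F_{\pi_w}$.

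For the inactive region, the dual observation is that by property~1 of Lemma~\ref{lemma:Mon} combined with the definition of $w$ in Definition~\ref{def:actFad}, $z_{\pi_w,k} \geq z_{\pi_w,\pi_{w+1}} \geq 1$ for every $k > \pi_w$. The single crossing point property at $z = 1$ then gives $F_k(n_{\pi_w}+1) \leq F_{\pi_w}(n_k+1)$. Combined with $F_{k-1} \geq F_{\pi_w}$, this yields $p_k \leq F_{\pi_w}(n_k-n_{\pi_w})/(n_{\pi_w}+1)$, whence $p_k/n_k \leq F_{\pi_w}/(n_{\pi_w}+1)$. Summing over $k > \pi_w$ and using $\sum_{k>\pi_w}p_k = 1-F_{\pi_w}$ bounds the inactive-region contribution by $(1-F_{\pi_w}) + (K-\pi_w) F_{\pi_w}/(n_{\pi_w}+1)$.

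Adding the two bounds and using $1/(n_{\pi_w}+1) \leq 1$ together with $F_{\pi_w}\leq 1$ collapses everything to $T \leq 1 + F_{\pi_w}(K-1) \leq K$, completing the argument. I expect the main obstacle to be extracting the two structural inequalities---``chord dominance at $z = z_{\pi_{m-1},\pi_m}$'' for the active groups and ``tail dominance at $z = 1$'' for the inactive states---cleanly from the MUF machinery; once these are in place the remaining algebra is routine. The degenerate case $g_K = 0$, excluded by the hypothesis of Proposition~\ref{prop:Ce}, follows by a standard limiting argument.
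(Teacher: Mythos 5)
Your proof is correct, and it takes a genuinely different route from the paper. The paper proves a stronger, termwise inequality (Lemma~\ref{lemma:AG}): $\frac{n_k+1}{n_k\Lambda_k}\le\frac{1}{p_k}$ for every $k$, which immediately gives $A(F_G,1)\le\sum_k p_k\log(1/p_k)=H(F_G)\le\log K$. You instead apply Jensen's inequality first, reducing the task to an averaged bound $T:=\sum_k p_k\,\frac{n_k+1}{n_k\Lambda_k}\le K$, and then establish that bound groupwise: chord dominance of $u_k$ by $u^*$ at the breakpoints $z_{\pi_{m-1},\pi_m}$ (and at $z=0$ for $k\le\pi_s$) gives $p_k(1+g_k)/\Lambda_k\le F_{\pi_w}$ for every active $k$, while the tail bound $z_{\pi_w,k}\ge 1$ for $k>\pi_w$ controls the inactive contribution through $p_k/n_k\le F_{\pi_w}/(n_{\pi_w}+1)$, and the two pieces collapse to $T\le 1+F_{\pi_w}(K-1)\le K$. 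The structural facts you extract are essentially the same MUF inequalities used in Appendix~\ref{App:AG}, but packaged for a summed rather than per-index bound; the paper's approach buys the sharper intermediate result $A(F_G,1)\le H(F_G)$ (tight with uniform probabilities), whereas yours is slightly more economical in that it never needs the full per-$k$ inequality $\frac{n_k+1}{n_k\Lambda_k}\le 1/p_k$. Both routes rely on Proposition~\ref{prop:Ce} and so require the same $g_K=0$ limiting argument, which you correctly flag.
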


\begin{proof}
Let us first prove the desired inequality \eqref{eq:AG2} for the case where $g_K>0$. In this case, by Proposition~\ref{prop:Ce} the additive gap $A(F_G,1)$ can be written as
\begin{eqnarray}
A(F_G,1) &=& \sum_{k=1}^Kp_k\log\left(\frac{n_k+1}{n_k}\right)-\sum_{k=1}^Kp_k\log\Lambda_k\\
&=& \sum_{k=1}^Kp_k\log\left(\frac{n_k+1}{n_k\Lambda_k}\right).
\label{eq:AG10}
\end{eqnarray}
We have the following lemma, whose proof is rather technical and hence is deferred to Appendix~\ref{App:AG}.
\begin{lemma} \label{lemma:AG}
For any $k=1,\ldots,K$, we have
\begin{equation}
\frac{n_k+1}{n_k\Lambda_k} \leq \frac{1}{p_k}.
\label{eq:AG11}
\end{equation}
\end{lemma}

Substituting \eqref{eq:AG11} into \eqref{eq:AG10}, we have
\begin{equation}
A(F_G,1) \leq \sum_{k=1}^Kp_k\log(1/p_k) = H(F_G) \leq \log{K}.
\end{equation}
This proves the desired inequality \eqref{eq:AG2} for the case where $g_K>0$.

For the case where $g_K=0$, let us consider a modified power-gain distribution $F'_G(\cdot)$ with probabilities $p'_k=p_k$ for all $k=1,\ldots,K$ and $g'_k=g_k$ for all $k=1,\ldots,K-1$. While we have $g_K=0$ for the original power-gain distribution $F_G(\cdot)$, we shall let $g'_K=\epsilon$ for some
\begin{equation}
0 < \epsilon < \min_{k=1,\ldots,K-1}\left[\frac{F_k}{(1-F_k)+n_k}\right].
\end{equation}
By \eqref{eq:Int}, this will ensure that 
\begin{equation}
z'_{k,K}=\frac{F_k/\epsilon-n_k}{1-F_k}>1, \quad \forall k=1,\ldots,K-1.
\end{equation} 
By the definition of $w'$, $z'_{\pi'_{w'-1},\pi'_{w'}}<1$ so we must have $\pi'_{w'} \neq K$ and hence $\pi'_{w'} < K$. By Proposition~\ref{prop:OptPow}, this implies that ${\beta'}^*_K={\beta'}^*_{K-1}$ so the fading state $g'_K$ are assigned to zero power for the given power allocation $({\beta'}^*_1,\ldots,{\beta'}^*_K)$. Hence, the given power allocation $({\beta'}^*_1,\ldots,{\beta'}^*_K)$ achieves the \emph{same} expected rate for both power-gain distributions $F_G(\cdot)$ and $F'_G(\cdot)$. Since $({\beta'}^*_1,\ldots,{\beta'}^*_K)$ is optimal for the power-gain distribution $F'_G(\cdot)$ but not necessarily so for $F_G(\cdot)$, we have
\begin{equation}
C_{exp}(F_G,1) \geq C_{exp}(F'_G,1).
\label{eq:JH200}
\end{equation}
On the other hand, improving the realizations of the power-gain can only improve the channel capacity\footnote{By the same argument, we also have $C_{exp}(F_G,1) \leq C_{exp}(F'_G,1)$ and hence $C_{exp}(F_G,1) = C_{exp}(F'_G,1)$, even though this direction of the inequality is not needed in the proof.}, so we have
\begin{align}
C_{erg}(F_G) & \leq C_{erg}(F'_G).\label{eq:JH300}
\end{align}
Combining \eqref{eq:JH200} and \eqref{eq:JH300} gives
\begin{align}
A(F_G,1) &=C_{erg}(F_G)-C_{exp}(F_G,1)\\
&\leq C_{erg}(F'_G)-C_{exp}(F'_G,1)\\
&=A(F_G',1)\\
& \leq \log{K}
\end{align}
where the last inequality follows from the previous case for which $g'_K=\epsilon>0$. This completes the proof for the case where $g_K=0$. 

Combing the above two cases completes the proof of Proposition~\ref{prop:AG}.
\end{proof}

\begin{prop}[Worst-case additive gap, forward part]\label{prop:AG2}
Fix $K$, and consider the power-gain distributions $F_G^{(d)}(\cdot)$ with
\begin{equation}
g_k=\sum_{j=1}^{K-k+1}d^j=\frac{d(d^{K-k+1}-1)}{d-1}
\end{equation}
for some $d > \max\{K-1,2\}$ and uniform probabilities $p_k = 1/K$ for all $k=1,\ldots,K$. For this particular parameter family of power-gain distributions, we have
\begin{equation}
\lim_{d \rightarrow \infty}A(F_G^{(d)},1) = \log{K}.
\end{equation}
\end{prop}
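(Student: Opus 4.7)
As $d \to \infty$ one has $g_k^{(d)} = d^{K-k+1}(1+O(1/d))$, so with the identification $\mathrm{SNR} \leftrightarrow d$ and $r_k := K-k+1$, the family $F_G^{(d)}$ falls squarely into the high-SNR regime analyzed in Section~\ref{sec:Pf}.B. Since the probabilities are uniform, the informal conclusion of that subsection predicts $A(F_G^{(d)},1) \to H(F_G^{(d)}) = \log K$, and the task is to make this rigorous. The plan is to use the exact expression for $C_{exp}$ in Proposition~\ref{prop:Ce}: the hypothesis $d > \max\{K-1,2\}$ will force the cleanest possible active-state structure ($I=K$, $\pi_i=i$, $s=1$, $w=K$), after which both capacities admit closed forms that can be expanded in powers of $1/d$.

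\textbf{Step 1 (active-state structure).} First I would verify that for $d > \max\{K-1,2\}$, the dominating MUF has $I=K$, $\pi_i=i$, and $s=1$, $w=K$. With $F_k = k/K$, \eqref{eq:Int} simplifies to $z_{k,l} = (kn_l-ln_k)/(l-k)$, and the telescoping identity $g_{k-1}-g_k = d^{K-k+2}$ produces a clean closed form for $n_m-n_{m-1}$ and for every $z_{k,l}$. One then checks by elementary algebra (i) $0 < z_{k,k+1}$ for $k=1,\ldots,K-1$, (ii) $z_{k,k+1} \le z_{k,l}$ for every $l>k+1$ (so that the recursive definition \eqref{eq:RecDef} yields $\pi_{i+1}=\pi_i+1$), and (iii) $z_{k,k+1}<1$. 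The binding case of (iii), namely $z_{K-1,K}<1$, reduces to $d^2-(K-2)d+1>0$, which holds for $d \ge K-1$; the bound $d>K-1$ also gives $g_{k+1}\ge d>k$, so (iii) follows for the remaining $k$.

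\textbf{Step 2 (exact capacities and limit).} With this structure, Proposition~\ref{prop:Ce} yields
\begin{equation*}
C_{exp}(F_G^{(d)},1) = \frac{1}{K}\log\frac{g_1}{K} + \sum_{m=2}^{K}\frac{1}{K}\log\frac{1/K}{n_m-n_{m-1}} + \log(n_K+1),
\end{equation*}
and from \eqref{eq:Ce}, $C_{erg}(F_G^{(d)}) = (1/K)\sum_{k=1}^{K}\log(1+g_k)$. Substituting $g_k = d^{K-k+1}(1+O(1/d))$, $n_K=1/d$, and $n_m-n_{m-1}=d^{K-m+2}/(g_{m-1}g_m)$, and using $\sum_{k=1}^{K}(K-k+1) = K(K+1)/2$, a direct expansion shows that both $C_{erg}$ and $C_{exp}$ have the same leading term $\tfrac{K+1}{2}\log d$. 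This term cancels in the additive gap, leaving
\begin{equation*}
A(F_G^{(d)},1) = \log K + O\!\left(\frac{\log d}{d}\right),
\end{equation*}
which tends to $\log K$ as $d\to\infty$.

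\textbf{Main obstacle.} The only genuinely delicate piece is Step~1: establishing the strict inequalities that pin down $\pi_i=i$, $s=1$, $w=K$ for \emph{every} $d>\max\{K-1,2\}$ rather than merely asymptotically. The algebra is tractable thanks to the telescoping structure of $g_k = \sum_{j=1}^{K-k+1}d^j$, but must be done uniformly in $k$. The asymptotic bookkeeping of Step~2 is then a routine Taylor expansion that mirrors (and makes rigorous) the high-SNR argument of Section~\ref{sec:Pf}.B, confirming that the upper bound of Proposition~\ref{prop:AG} is tight in the limit.
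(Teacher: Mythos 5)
Your proof follows the same two steps as the paper (verify the active-state structure $I=K$, $\pi_i=i$, $s=1$, $w=K$ for $d>\max\{K-1,2\}$; then specialize Proposition~\ref{prop:Ce} and let $d\to\infty$), the only cosmetic difference being that you expand the aggregate capacities in powers of $1/d$ while the paper passes to the limit of each factor $\tfrac{n_k+1}{n_k\Lambda_k}\to K$ separately. One small fix: in Step~1(ii), because the recursion \eqref{eq:RecDef} breaks ties toward the \emph{larger} index, you need the strict inequality $z_{k,k+1}<z_{k,l}$ for every $l>k+1$ (the paper gets this by showing $z_{k,l}<z_{k,l+1}$), not the non-strict $\le$ you wrote; otherwise a tie would force $\pi_{i+1}>i+1$.
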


\begin{proof}
For the given power-gain distribution $F_G^{(d)}$, it is straightforward to calculate that for any $1 \leq k < l <K$
\begin{align}
\frac{n_k+z_{k,l}}{n_k+z_{k,l+1}} & =\frac{l-k+1}{l-k}\frac{d^{K-l}-1}{d^{K-l+1}-1}\frac{d^{l-k}-1}{d^{l-k+1}-1}
< \frac{l-k+1}{l-k}\frac{d^{l-k}-1}{d^{l-k+1}-1}.\label{eq:JH1}
\end{align}
where the last inequality follows from the fact that $d>1$. Since $l-k \geq 1$ and $d>2$, we have
\begin{align}
(l-k+1)(d^{l-k}-1)-(l-k)(d^{l-k+1}-1) &= \left[1-(l-k)(d-1)\right]d^{l-k}-1<0.\label{eq:JH2}
\end{align}
Substituting \eqref{eq:JH2} into \eqref{eq:JH1} gives 
\begin{align}
\frac{n_k+z_{k,l}}{n_k+z_{k,l+1}} & < 1
\end{align}
which immediately implies that $z_{k,l} < z_{k,l+1}$ for any $1 \leq k < l <K$. We thus have $I=K$ and $\pi_i=i$ for all $i=1,\ldots,K$. Since $d > \max\{K-1,2\}$, we have
\begin{equation}
z_{1,2} = \frac{(d-2)d^K+d}{(d-1)g_1g_2}>0
\end{equation}
and
\begin{equation}
z_{K-1,K} = \frac{(K-1)(d+d^2)-Kd}{d(d+d^2)} < \frac{K-1}{d} < 1
\end{equation}
so by definition we have $s=1$ and $w=K$. Thus, by the expression of $\Lambda_k$ from \eqref{eq:Ce2} we have
\begin{equation}
\Lambda_k = \left\{
\begin{array}{ll}
\frac{\left(\sum_{j=1}^{K}d^j\right)\left(1+d\right)}{K\cdot{d}}, & \quad k=1\\
\frac{\left(\sum_{j=1}^{K-k+1}d^j\right)\left(\sum_{j=1}^{K-k+2}d^j\right)\left(1+d\right)}{K\cdot{d}^{K-k+3}}, & \quad k=2,\ldots,K.\\
\end{array}
\right.
\end{equation}
It follows that
\begin{align}
\frac{n_1+1}{n_1\Lambda_1} &= K\cdot\frac{\left(1+\sum_{j=1}^{K}d^j\right)d}{\left(\sum_{j=1}^{K}d^j\right)\left(1+d\right)}\\
&= K\cdot \frac{d^{K+1}+O(d^K)}{d^{K+1}+O(d^K)}\\
& \rightarrow K
\label{eq:ASL1}
\end{align}
in the limit as $d \rightarrow \infty$ and
\begin{align}
\frac{n_k+1}{n_k\Lambda_k} &= K\cdot\frac{\left(1+\sum_{j=1}^{K-k+1}d^j\right)d^{K-k+3}}{\left(\sum_{j=1}^{K-k+1}d^j\right)\left(\sum_{j=1}^{K-k+2}d^j\right)\left(1+d\right)}\\
&= K\cdot \frac{d^{2(K-k)+4}+O(d^{2(K-k)+3})}{d^{2(K-k)+4}+O(d^{2(K-k)+3})}\\
& \rightarrow K
\label{eq:ASL2}
\end{align}
in the limit as $d \rightarrow \infty$ for any $k=2,\ldots,K$. A numerical example illustrating the convergence of \eqref{eq:ASL1} and \eqref{eq:ASL2} is provided in Fig.~\ref{fig:AG}. By \eqref{eq:AG10}, the additive gap
\begin{align}
A(F_G^{(d)},1) &= \sum_{k=1}^{K}p_k\log\left(\frac{n_k+1}{n_k\Lambda_k}\right)\\
& \rightarrow \sum_{k=1}^{K}\frac{1}{K}\log{K}\\
&= \log{K}
\end{align}
in the limit as $d \rightarrow \infty$. This completes the proof of Proposition~\ref{prop:AG2}.
\end{proof}

\begin{figure}[t!]
\centering
\includegraphics[width=0.9\linewidth,draft=false]{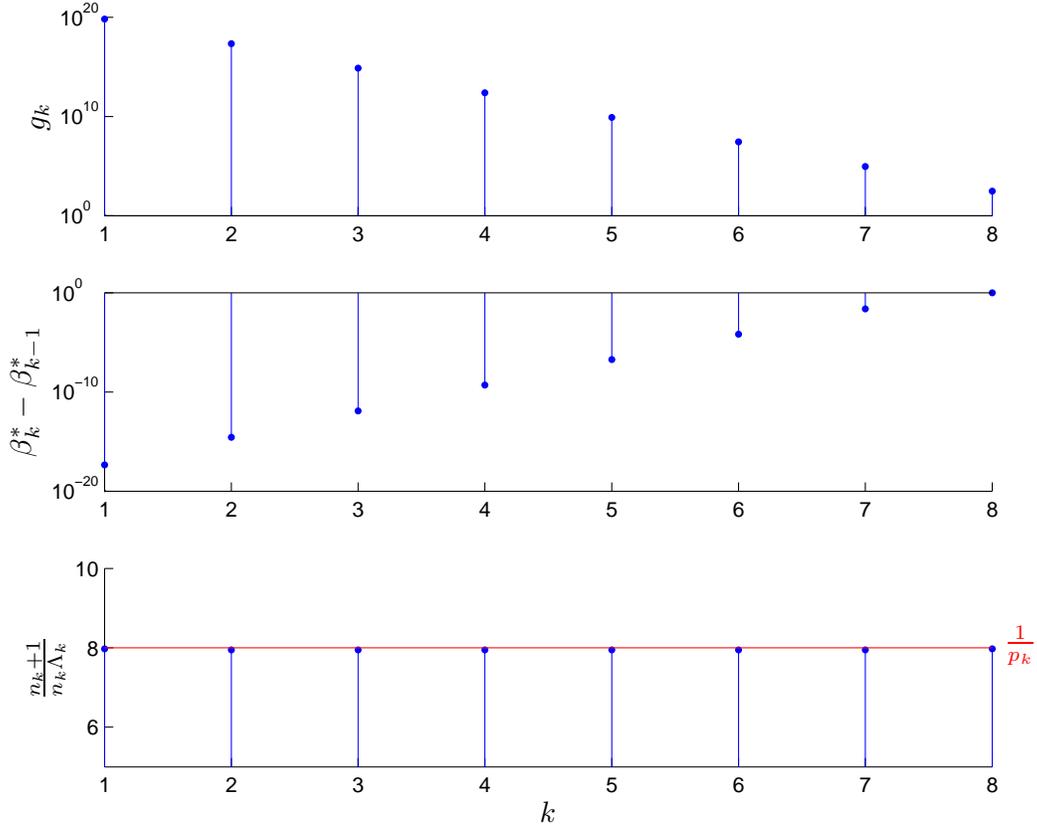}
\caption{A numerical example illustrating the convergence of \eqref{eq:ASL1} and \eqref{eq:ASL2}. In this example, $K=8$ and $d=300$.}
\label{fig:AG}
\end{figure}

Combining Propositions~\ref{prop:AG} and \ref{prop:AG2} completes the proof of the desired worst-case additive gap result \eqref{eq:AG}.

\subsection{Multiplicative Gap}
Similar to the additive case, to prove the worst-case multiplicative gap result \eqref{eq:MG} we shall prove that $\sup_{F_G}M(F_G,1) \leq K$ and $\sup_{F_G}M(F_G,1) \geq K$ separately. 

\begin{prop}[Worst-case multiplicative gap, converse part]\label{prop:MG}
For any power-gain distribution $F_G(\cdot)$ with $K$ different realizations of the power gain in each coherent block, we have 
\begin{equation}
M(F_G,1) \leq K. \label{eq:MG2}
\end{equation}
\end{prop}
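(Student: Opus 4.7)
The plan is to lower-bound $C_{exp}(F_G,1)$ by restricting the optimization program \eqref{eq:P1} to a one-parameter family of single-layer power allocations and then invoke the elementary max-versus-average inequality. The detailed machinery leading to Propositions~\ref{prop:OptPow} and \ref{prop:Ce} is not needed for this converse direction; a direct suboptimal scheme suffices.

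First I would fix an arbitrary $k \in \{1,\ldots,K\}$ and consider the cumulative power vector $\beta_1 = \cdots = \beta_{k-1} = 0$ and $\beta_k = \cdots = \beta_K = 1$. This vector is feasible in \eqref{eq:P1}, and in the objective every term except the $k$-th vanishes---either because $\beta_{j-1} = \beta_j = 0$ for $j<k$ or because $\beta_{j-1} = \beta_j = 1$ for $j > k$---while the $k$-th term evaluates to $F_k \log(1+g_k)$. Operationally this is the ``send a single codeword at rate $\log(1+g_k)$ and decode it iff $g \geq g_k$'' scheme. Combined with the trivial bound $F_k = \sum_{j=1}^{k} p_j \geq p_k$, this yields
\[
C_{exp}(F_G,1) \;\geq\; F_k \log(1+g_k) \;\geq\; p_k \log(1+g_k), \qquad k=1,\ldots,K.
\]

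Second, taking the maximum over $k$ and using that the maximum of $K$ nonnegative reals is at least their arithmetic mean,
\[
C_{exp}(F_G,1) \;\geq\; \max_{1 \leq k \leq K} p_k \log(1+g_k) \;\geq\; \frac{1}{K}\sum_{k=1}^{K} p_k \log(1+g_k) \;=\; \frac{1}{K}\,C_{erg}(F_G),
\]
which rearranges to $M(F_G,1) \leq K$. The denominator $C_{exp}(F_G,1)$ is strictly positive because $g_1 > 0$ and $p_1 > 0$ already force $F_1 \log(1+g_1) > 0$, so the ratio defining $M(F_G,1)$ is well-defined.

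I do not expect any serious obstacle. The only items worth verifying carefully are (a) feasibility of the proposed $(\beta_1,\ldots,\beta_K)$ and the fact that the corresponding objective in \eqref{eq:P1} collapses to $F_k \log(1+g_k)$, both of which are one-line checks, and (b) the observation that, unlike the additive case treated in Proposition~\ref{prop:AG}, no separate argument is needed for $g_K = 0$: the single-layer bound and the subsequent inequalities are valid term-by-term regardless of whether the weakest gain vanishes. Consistency with the low-SNR asymptotic analysis preceding this proposition, which already anticipated the bound $K$ via the chain $M(F_G,1) \approx \sum_k p_k \alpha_k / (F_{\pi_s}\alpha_{\pi_s}) \leq K$, provides a useful sanity check.
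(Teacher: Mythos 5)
Your proof is correct, and it reaches the bound by a genuinely different and considerably more elementary route than the paper's. The paper decomposes $M(F_G,1)$ as $\sum_{k=1}^K \frac{p_k\log(1+g_k)}{C_{exp}(F_G,1)}$ and then bounds each summand by $1$ via Lemma~\ref{lemma:MG}; that lemma in turn is proved (Appendix~\ref{App:MG}) by invoking the explicit closed-form expression for $C_{exp}(F_G,1)$ from Proposition~\ref{prop:Ce} and working through a four-case analysis with log-sum inequalities and a local-extremum calculation. Your argument establishes the exact same intermediate inequality $p_k\log(1+g_k)\le C_{exp}(F_G,1)$ for each $k$, but does so in one line by evaluating the maximization program \eqref{eq:P1} at the feasible single-layer point $\beta_j=\mathbb{1}\{j\ge k\}$, giving the even slightly stronger $F_k\log(1+g_k)\le C_{exp}(F_G,1)$; you then close with the max-versus-average step instead of summing the $K$ ratios directly, which is an equivalent bookkeeping of the same final step. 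What your approach buys is a self-contained proof of the converse direction that needs neither the MUF machinery nor Proposition~\ref{prop:Ce}, and that handles $g_K=0$ without the separate perturbation argument the paper uses; what the paper's approach buys is that the same exact-formula infrastructure is reused for the additive converse (Proposition~\ref{prop:AG} and Lemma~\ref{lemma:AG}) and for the forward/achievability parts, so the heavier development is amortized across the rest of the section. Your reasoning is sound throughout: the feasibility of the chosen $(\beta_1,\ldots,\beta_K)$, the collapse of the objective to a single term, and the positivity of the denominator (from $p_1>0$ and $g_1>0$) are all verified correctly.
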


\begin{proof}
Let us first prove the desired inequality \eqref{eq:MG2} for the case where $g_K>0$. By definition the multiplicative gap $M(F_G,1)$ can be written as
\begin{equation}
M(F_G,1) = \sum_{k=1}^K\frac{p_k\log\left(\frac{n_k+1}{n_k}\right)}{C_{exp}(F_G,1)}.
\label{eq:MG3}
\end{equation}
We have the following lemma, whose proof is deferred to Appendix~\ref{App:MG}.
\begin{lemma} \label{lemma:MG}
For any $k=1,\ldots,K$, we have
\begin{equation}
\frac{p_k\log\left(\frac{n_k+1}{n_k}\right)}{C_{exp}(F_G,1)} \leq 1.
\label{eq:MG4}
\end{equation}
\end{lemma}

Substituting \eqref{eq:MG4} into \eqref{eq:MG3}, we have
\begin{equation}
M(F_G,1) \leq \sum_{k=1}^K1 =K.
\end{equation}
This proves the desired inequality \eqref{eq:MG2} for the case where $g_K>0$.

For the case where $g_K=0$, we can use the same argument as for the additive case. More specifically, a modified power-gain distribution $F'_G(\cdot)$ can be found such that $g'_K>0$, $C_{exp}(F'_G,1) = C_{exp}(F_G,1)$, and $C_{erg}(F'_G) \geq C_{erg}(F_G)$. Thus, the multiplicative gap
\begin{align}
M(F_G,1) &=\frac{C_{erg}(F_G)}{C_{exp}(F_G,1)}\\
&\leq \frac{C_{erg}(F'_G)}{C_{exp}(F'_G,1)}\\
&=M(F_G',1)\\
& \leq K
\end{align}
where the last inequality follows from the previous case for which $g'_K>0$. This completes the proof for the case where $g_K=0$. 

Combing the above two cases completes the proof of Proposition~\ref{prop:MG}.
\end{proof}

\begin{prop}[Worst-case multiplicative gap, forward part]\label{prop:MG2}
Fix $K$, and consider the power-gain distributions $F_G^{(d)}(\cdot)$ with
\begin{equation}
n_k=\sum_{j=1}^{k}d^j
\end{equation}
for some $d >0$ and
\begin{equation}
p_k = \frac{d^k}{\sum_{j=1}^{K}d^j}
\end{equation} 
for all $k=1,\ldots,K$. For this particular parameter family of power-gain distributions, we have
\begin{equation}
\lim_{d \rightarrow \infty}M(F_G^{(d)},1) = K.
\end{equation}
\end{prop}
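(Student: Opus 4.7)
The plan is to apply Proposition~\ref{prop:Ce} directly to the family $F_G^{(d)}$ and then evaluate the ratio $C_{erg}/C_{exp}$ in the limit $d \to \infty$. The key structural observation is that in this family $F_k = n_k/n_K$, because both $F_k$ and $n_k$ equal $\sum_{j=1}^{k} d^j$ up to the common normalization $\sum_{j=1}^{K} d^j$. Substituting this into \eqref{eq:Int} gives $F_k n_l - F_l n_k = 0$, so every intersection point $z_{k,l}$ collapses to $0$. Applying the recursive definition \eqref{eq:RecDef} with its max tie-breaking rule then forces $I = 2$, $\pi_1 = 1$, and $\pi_2 = K$. With $z_{\pi_1,\pi_2} = 0$, Definition~\ref{def:actFad} yields $s = w = 2$, so $g_K$ is the unique active fading state. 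This is precisely the low-SNR degenerate regime anticipated by the second asymptotic analysis above.

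Next I would invoke Proposition~\ref{prop:Ce}. Since $s = w$, the middle sum in \eqref{eq:Ce1.6} is empty, and with $F_{\pi_s} = F_{\pi_w} = F_K = 1$ the expression collapses to
\[
C_{exp}(F_G^{(d)},1) = \log\!\left(1 + \frac{1}{n_K}\right),
\]
while the ergodic capacity is simply $C_{erg}(F_G^{(d)}) = \sum_{k=1}^{K} p_k \log(1 + 1/n_k)$. Since $n_k \to \infty$ for each $k$ as $d \to \infty$, the expansion $\log(1+x) = x + O(x^2)$ yields
\[
\frac{C_{erg}(F_G^{(d)})}{C_{exp}(F_G^{(d)},1)} = (1 + o(1))\sum_{k=1}^{K} \frac{p_k\, n_K}{n_k} = (1 + o(1))\sum_{k=1}^{K} \frac{d^k}{n_k}.
\]
Because $n_k = d^k\bigl(1 + d^{-1} + \cdots + d^{-(k-1)}\bigr)$, each term $d^k/n_k \to 1$, so the sum tends to $K$, giving the claimed limit.

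The main obstacle is the fully degenerate configuration in which every pair of MUFs intersects exactly at $z = 0$. This edge case is handled cleanly by the max tie-breaking rule in \eqref{eq:RecDef}, which drives $\pi_2$ all the way to $K$, and by the inclusive inequality $z_{\pi_{i-1},\pi_i} \leq 0$ in Definition~\ref{def:actFad}, which places $s = 2$ rather than leaving the index ambiguous. Once these conventions are recognized, Proposition~\ref{prop:Ce} produces the exact formula for $C_{exp}$ and the remainder of the argument is a one-line Taylor expansion.
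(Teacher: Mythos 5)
Your proposal is correct and follows essentially the same route as the paper: identify that $F_k = n_k/n_K$ forces every intersection point $z_{k,l}$ to zero, conclude $I=2$, $\pi_2=K$, $s=w=2$, extract $C_{exp}(F_G^{(d)},1)=\log\bigl((n_K+1)/n_K\bigr)$ from Proposition~\ref{prop:Ce}, and take the $d\to\infty$ limit. The only cosmetic difference is in the last step: the paper establishes the limit by sandwiching each term between $p_k n_K/(n_k+1)$ (via the inequalities \eqref{eq:LogIn}) and the uniform upper bound $1$ from Lemma~\ref{lemma:MG}, whereas you use a direct first-order Taylor expansion, which works because $K$ is fixed and finite so the $o(1)$ error is uniform over $k$.
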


\begin{proof}
Note that for the given power-gain distribution $F_G^{(d)}$ pair,
\begin{equation}
F_k=\sum_{j=1}^{k}p_j=\frac{\sum_{j=1}^{k}d^j}{\sum_{j=1}^{K}d^j}
\end{equation}
so
\begin{equation}
z_{k,l}=\frac{F_kn_l-F_ln_k}{F_l-F_k}=0, \quad \forall 1 \leq k < l \leq K.
\end{equation}
We thus have $I=2$, $\pi_1=1$, $\pi_2=K$, and $s=w=2$. By the expression of $\Lambda_k$ from \eqref{eq:Ce2}, we have
\begin{equation}
\Lambda_k= \frac{n_K+1}{n_K}, \quad \forall k=1,\ldots,K.
\end{equation}
It follows that the expected capacity
\begin{equation}
C_{exp}(F_G^{(d)},1)=\sum_{k=1}^{K}p_k\log\Lambda_k=\log\left(\frac{n_K+1}{n_K}\right).
\end{equation}
We thus have
\begin{align}
\frac{p_k\log\left(\frac{n_k+1}{n_k}\right)}{C_{exp}(F_G^{(d)},1)} &=\frac{p_k\log\left(\frac{n_k+1}{n_k}\right)}{\log\left(\frac{n_K+1}{n_K}\right)}\\
& \geq \frac{p_kn_K}{n_k+1}\label{eq:MG5}\\
& = \frac{d^k}{\sum_{j=1}^kd^j+1}\\
& = \frac{d^k}{d^k+O(d^{k-1})}\\
& \rightarrow 1\label{eq:MG6}
\end{align}
in the limit as $d \rightarrow \infty$ for any $k=1,\ldots,K$, where \eqref{eq:MG5} follows from the well-known inequalities
\begin{equation}
\frac{x}{1+x} \leq \log(1+x) \leq x, \quad \forall x \geq 0, \label{eq:LogIn}
\end{equation}
so we have $\log\left(\frac{n_k+1}{n_k}\right) \geq \frac{1}{n_k+1}$ and $\log\left(\frac{n_K+1}{n_K}\right) \leq \frac{1}{n_K}$. On the other hand, by Lemma~\ref{lemma:MG}
\begin{align}
\frac{p_k\log\left(\frac{n_k+1}{n_k}\right)}{C_{exp}(F_G^{(d)},1)} &\leq 1\label{eq:MG8}
\end{align}
for any $k=1,\ldots,K$. Combining \eqref{eq:MG6} and \eqref{eq:MG8} proves that
\begin{equation}
\frac{p_k\log\left(\frac{n_k+1}{n_k}\right)}{C_{exp}(F_G^{(d)},1)} \rightarrow 1
\label{eq:MSL}
\end{equation}
in the limit as $d \rightarrow \infty$ for all $k=1,\ldots,K$. A numerical example illustrating the convergence of \eqref{eq:MSL} is illustrated in Fig.~\ref{fig:MG}. By \eqref{eq:MG3}, the multiplicative gap 
\begin{equation}
M(F_G^{(d)},1) = \sum_{k=1}^K\frac{p_k\log\left(\frac{n_k+1}{n_k}\right)}{C_{exp}(F_G^{(d)},1)} \rightarrow \sum_{k=1}^{K}1=K
\end{equation}
in the limit as $d \rightarrow \infty$. This completes the proof of Proposition~\ref{prop:MG2}.
\end{proof}

\begin{figure}[t!]
\centering
\includegraphics[width=0.9\linewidth,draft=false]{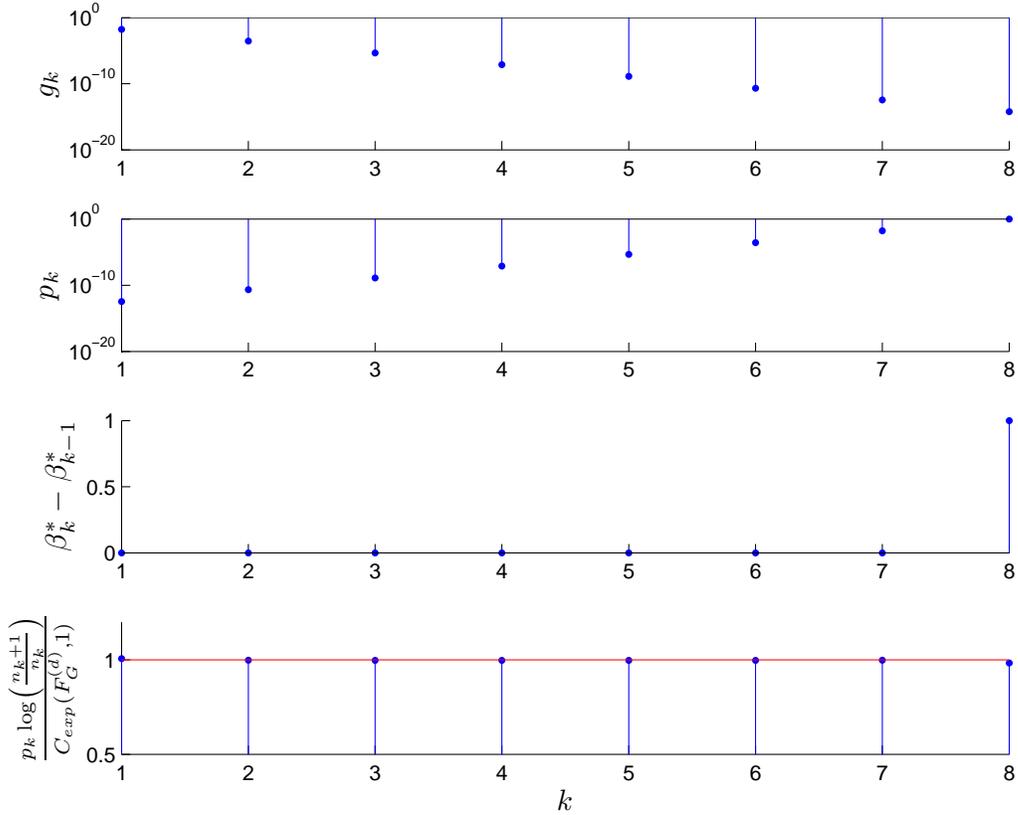}
\caption{A numerical example illustrating the convergence of \eqref{eq:MSL}. In this example, $K=8$ and $d=60$.}
\label{fig:MG}
\end{figure}

Combining Propositions~\ref{prop:MG} and \ref{prop:MG2} completes the proof of the desired worst-case multiplicative gap result \eqref{eq:MG}.

\section{Writing on Block-Fading Paper} \label{sec:FP}
Consider the problem of writing on fading paper \cite{BB-IT08,BZ-ISIT06,ZKL-ISIT07}:
\begin{equation}
Y[t]=\sqrt{G[t]}\left(X[t]+S[t]\right)+Z[t] \label{eq:Ch2}
\end{equation}
where $\{X[t]\}$ are the (complex) channel inputs which are subject to a unit average power constraint, $\{G[t]\}$ are the power gains of the channel fading which we assume to be \emph{unknown} to the transmitter but known at the receiver, $\{S[t]\}$ and $\{Z[t]\}$ are independent additive white circularly symmetric complex Gaussian interference and noise with zero means and variance $\mathrm{INR}$ and $1$ respectively, and $\{Y[t]\}$ are the channel outputs. The interference signal $\{S[t]\}$ are assumed to be non-causally known at the transmitter but not to the receiver. Note here that the instantaneous power gain $G[t]$ applies to both the channel input $X[t]$ and the known interference $S[t]$, so this model is particularly relevant to the problem of precoding for multiple-input multiple-output fading broadcast channels.

As for the point-to-point fading channel (\ref{eq:Ch}), we are interested in characterizing the worst-case expected-rate loss for the slow-fading scenario. However, unlike for the point-to-point fading channel (\ref{eq:Ch}), the ergodic capacity of the fading-paper channel (\ref{eq:Ch2}) is unknown. Below, we first characterize the ergodic capacity of the fading-paper model (\ref{eq:Ch2}) to within in one bit per channel use. As we will see, this will also lead to a characterization of the additive expected-capacity loss to within one bit per channel use for the slow-fading scenario.

\subsection{Ergodic Capacity to within One Bit}
Denote by $C_{erg}^{fp}(\mathrm{INR},F_G)$ the ergodic capacity of the fading-paper channel (\ref{eq:Ch2}) with the transmit interference-to-noise ratio $\mathrm{INR}$ and the power-gain distribution $F_G(\cdot)$. We have the following characterization of $C_{erg}^{fp}(\mathrm{INR},F_G)$ to within one bit.

\begin{theorem}\label{thm:fp1}
For any transmit interference-to-noise ratio $\mathrm{INR}$ and any power-gain distribution $F_G(\cdot)$, we have
\begin{equation}
C_{erg}(F_G)-\log2 \leq C_{erg}^{fp}(\mathrm{INR},F_G) \leq C_{erg}(F_G)\label{eq:fp1}
\end{equation}
where $C_{erg}(F_G)$ is the ergodic capacity of the point-to-point fading channel (\ref{eq:Ch}) of the same power-gain distribution as the fading-paper channel \eqref{eq:Ch2}.
\end{theorem}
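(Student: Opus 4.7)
\emph{Upper bound.}  I would use a standard genie argument: if the receiver is additionally granted non-causal access to the interference $\{S[t]\}$, then, using its existing knowledge of $\{G[t]\}$, it can form $Y[t]-\sqrt{G[t]}\,S[t]=\sqrt{G[t]}\,X[t]+Z[t]$, which is precisely the point-to-point fading channel~\eqref{eq:Ch}.  Since side information at the receiver cannot decrease capacity, this immediately yields $C_{erg}^{fp}(\mathrm{INR},F_G)\leq C_{erg}(F_G)$.

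\emph{Lower bound---the scheme.}  For the left-hand inequality I would apply Gelfand--Pinsker (Costa) coding with the Gaussian auxiliary $U=X+\alpha S$, where $X\sim\mathcal{CN}(0,1)$ is independent of $S\sim\mathcal{CN}(0,\mathrm{INR})$.  Conditioning on $G=g$ and using $S\perp G$, a direct Gaussian computation gives the per-state achievable rate
\[
R_{DPC}(\alpha,g) \;=\; I(U;Y\mid G{=}g)-I(U;S) \;=\; \log\!\left(\frac{1+g(1+\mathrm{INR})}{1+\mathrm{INR}\bigl[\alpha^{2}+g(1-\alpha)^{2}\bigr]}\right),
\]
so that $\mathbb{E}_{G}[R_{DPC}(\alpha,G)]$ is achievable for any fixed $\alpha$ by coding across coherent blocks.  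A short algebraic rearrangement exposes the per-state gap to the interference-free rate as
\[
\log(1+g)-R_{DPC}(\alpha,g) \;=\; \log\!\left(1+\frac{\mathrm{INR}\,(1+g)^{2}\bigl(\alpha-g/(1+g)\bigr)^{2}}{1+g+g\,\mathrm{INR}}\right),
\]
which vanishes at the state-matched choice $\alpha^{\star}(g)=g/(1+g)$ and thereby recovers Costa's no-loss result whenever the transmitter can match $\alpha$ to~$g$.

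\emph{Main obstacle.}  The difficulty is that the transmitter does \emph{not} know $G$, so $\alpha$ cannot be adapted per realization, and a single fixed $\alpha$ can leave a per-state gap arbitrarily larger than $\log 2$ when $g$ is poorly matched (e.g.\ when the support of $G/(1+G)$ is wide).  To overcome this, I plan to use a superposition (``broadcast-style'') DPC input: write $X=\sum_{k=1}^{K}X_{k}$ with independent Gaussian layers $X_{k}\sim\mathcal{CN}(0,P_{k})$, $\sum_{k}P_{k}=1$, precode each layer against $S$ with its own parameter $\alpha_{k}=g_{k}/(1+g_{k})$ matched to the $k$-th fading level, and perform successive decoding at the receiver (which uses its knowledge of $G$ to choose how many layers to decode).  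This parallels the superposition optimization used in Section~\ref{sec:Pf} for the point-to-point fading channel, and the same marginal-utility-function machinery from Propositions~\ref{prop:DMUF}--\ref{prop:OptPow} can be reused to optimize the power allocation $\{P_{k}\}$.

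The technical heart is to show that, after telescoping the per-layer DPC rates, the accumulated rate for state $g_{j}$ takes the form $\log(1+g_{j})-\mathrm{penalty}$ with penalty bounded by $\log 2$ uniformly in $\mathrm{INR}$ and $F_{G}$; substituting $\alpha_{k}=g_{k}/(1+g_{k})$ into the displayed gap formula makes this bound fall out of a clean algebraic manipulation.  For the tail where $g$ is small enough that $\log(1+g)\leq\log 2$ already, any residual shortfall can be absorbed by time-sharing with the no-transmission scheme, which contributes at most $\log 2\cdot\Pr(G\leq 1)$ to the expected loss.  Combining the superposition-DPC bound on the ``good'' states with the trivial bound on the ``bad'' states yields the desired left-hand inequality of~\eqref{eq:fp1}.
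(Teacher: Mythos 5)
Your upper bound via the genie argument (give $\{S[t]\}$ to the receiver, subtract, reduce to the interference-free channel) is exactly the paper's argument and is correct.

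Your lower bound, however, has a genuine gap. The scheme you propose---a superposition of $K$ DPC layers with $\alpha_k=g_k/(1+g_k)$, successive decoding, and the marginal-utility-function power optimization---is precisely the one-block broadcast-approach code. By Steinberg's result (used in Theorem~\ref{thm:fp2}), successive Costa precoding makes this scheme achieve exactly $C_{exp}(F_G,1)$, the \emph{one-block} expected capacity. But Theorem~\ref{thm:main} shows that $C_{erg}(F_G)-C_{exp}(F_G,1)$ can be as large as $\log K$, which is unbounded in $K$ and far larger than $\log 2$. Consequently, the claim that ``the accumulated rate for state $g_j$ takes the form $\log(1+g_j)-\mathrm{penalty}$ with penalty bounded by $\log 2$'' is false: the telescoping you invoke does not hold because the layer-$j$ rate is $\log\bigl((1+\beta_j g_j)/(1+\beta_{j-1}g_j)\bigr)$ with $g_j$, not $g_k$, in the argument; and even taking the one-block expected rate as the benchmark, the gap to ergodic capacity is intrinsically $\log K$ and cannot be repaired by time-sharing with silence on small $g$. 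Superposition against fading without transmitter CSI is fundamentally an expected-capacity scheme, not an ergodic-capacity scheme.

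The key idea you are missing is that the \emph{ergodic} lower bound should be proved by coding a single Gel'fand--Pinsker layer across many coherent blocks with the unscaled auxiliary $U=X+S$ (equivalently $\alpha=1$, not the per-state Costa choice). This gives per-block rate $\log(1+g(1+\mathrm{INR}))-\log(1+\mathrm{INR})\geq\log g$ for every realization $g$, hence $\mathbb{E}_G[\log G]$ ergodically. To prevent the blocks with $g<1$ from contributing negatively, the paper uses a \emph{separate-binning} code with independent per-block codebooks and opportunistic decoding: the decoder forms the set $\mathcal{L}$ of blocks with positive Gel'fand--Pinsker rate and only requires joint typicality on those blocks, which boosts the achievable rate to $\mathbb{E}_G[(\log G)^+]$. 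The inequality $(\log G)^+\geq\log(1+G)-\log 2$ then yields $C_{erg}^{fp}(\mathrm{INR},F_G)\geq C_{erg}(F_G)-\log 2$, uniformly in $\mathrm{INR}$ and $F_G$, with no layering, no power optimization, and no dependence on $K$. Your time-sharing-with-silence idea gestures at the same difficulty but does not capture it: it is the opportunistic decoding (exploiting the block-fading structure, not a transmitter-side choice) that removes the negative contributions of weak blocks.
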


\begin{proof}
To show that $C_{erg}^{fp}(\mathrm{INR},F_G) \leq C_{erg}(F_G)$, let us assume that the interference signal $\{S[t]\}$ are also known at the receiver. When the receiver knows both the power gain $\{G[t]\}$ and the interference signal $\{S[t]\}$, it can subtract $\{\sqrt{G[t]}S[t]\}$ from the received signal $\{Y[t]\}$. This will lead to an interference-free point-to-point fading channel (\ref{eq:Ch}), whose ergodic capacity is given by $C_{erg}(F_G)$. Since giving additional information to the receiver can only improve the ergodic capacity, we conclude that $C_{erg}^{fp}(\mathrm{INR},F_G) \leq C_{erg}(F_G)$.

To show that $C_{erg}^{fp}(\mathrm{INR},F_G) \geq C_{erg}(F_G)-\log{2}$, we shall show that
\begin{equation}
R=\mathbb{E}_G\left[(\log G)^+\right]
\label{eq:R}
\end{equation}
is an achievable ergodic rate for the fading-paper channel (\ref{eq:Ch2}), where $x^+:=\max(x,0)$. Since
\begin{equation}
(\log G)^+ \geq \log(1+G)-\log{2}
\end{equation}
for every possible realization of $G$, we will have
\begin{align}
C_{erg}^{fp}(\mathrm{INR},F_G) &\geq \mathbb{E}_G\left[(\log G)^+\right]\\
& \geq \mathbb{E}_G\left[\log(1+G)\right]-\log{2}\\
& = C_{erg}(F_G)-\log{2}.
\end{align}

To prove the achievability of the ergodic rate (\ref{eq:R}), we shall consider a communication scheme which is motivated by the following thought experiment. Note that with \emph{ideal} interleaving, the block-fading channel (\ref{eq:Ch2}) can be converted to a \emph{fast-fading} one \cite[Ch.~5.4.5]{TV-B05} for which the power gains $\{G[t]\}$ are independent across different time index $t$. Now that the channel is memoryless, by the well-known result of Gel'fand and Pinsker \cite{GP-PCIT80} the following ergodic rate is achievable:
\begin{equation}
R=\max_{(X,U)}\left[I(U;\sqrt{G}(X+S)+Z|G)-I(U;S)\right]
\end{equation}
where $U$ is an auxiliary variable which must be independent of $(G,Z)$. An optimal choice of the input-auxiliary variable pair $(X,U)$ is unknown \cite{BB-IT08,ZKL-ISIT07}. Motivated by the recent work \cite{ELGS-ITS11}, let us consider
\begin{equation}
U=X+S\label{eq:U}
\end{equation}
where $X$ is circularly symmetric complex Gaussian with zero mean and unit variance and is independent of $S$. For this choice of the input-auxiliary variable pair $(X,U)$, we have
\begin{align}
I&(U;\sqrt{G}(X+S)+Z|G)-I(U;S)\\
& = \mathbb{E}_G\left[\log(1+G(1+\mathrm{INR}))\right]-\log\left(1+\mathrm{INR}\right)\\
& \geq \mathbb{E}_G\left[\log(G(1+\mathrm{INR}))\right]-\log\left(1+\mathrm{INR}\right)\\
& = \mathbb{E}_G\left[\log G\right].\label{eq:JH1000}
\end{align}
This proves that
\begin{equation}
R=\left\{\mathbb{E}_G\left[\log G\right]\right\}^+
\label{eq:R2}
\end{equation}
is an achievable ergodic rate for the fading-paper channel (\ref{eq:Ch2}).

Note that even though the achievable ergodic rate (\ref{eq:R2}) is independent of the transmit interference-to-noise ratio $\mathrm{INR}$, it is \emph{not} always within one bit of the interference-free ergodic capacity $C_{erg}(F_G)$. This is because when $G<1$, we have $\log{G}<0$, i.e., the realizations of the power gain which are less than 1 contribute \emph{negatively} to the achievable rate (\ref{eq:R2}). By comparison, the realizations of the power gain \emph{never} contribute negatively (but possibly zero) to the achievable rate (\ref{eq:R}). Next, motivated by the secure multicast code construction proposed in \cite{KTW-IT08}, we shall consider a \emph{separate-binning} scheme which allows \emph{opportunistic} decoding at the receiver to boost the achievable ergodic rate from (\ref{eq:R2}) to (\ref{eq:R}).

Fix $\epsilon>0$ and let $(U,X)$ be chosen as in (\ref{eq:U}). Consider communicating a message $W \in \{1,\ldots,e^{LT_cR}\}$ over $L$ coherent blocks, each of a block length $T_c$ which we assume to be sufficiently large.

\begin{figure}[t!]
\centering
\includegraphics[width=0.7\linewidth,draft=false]{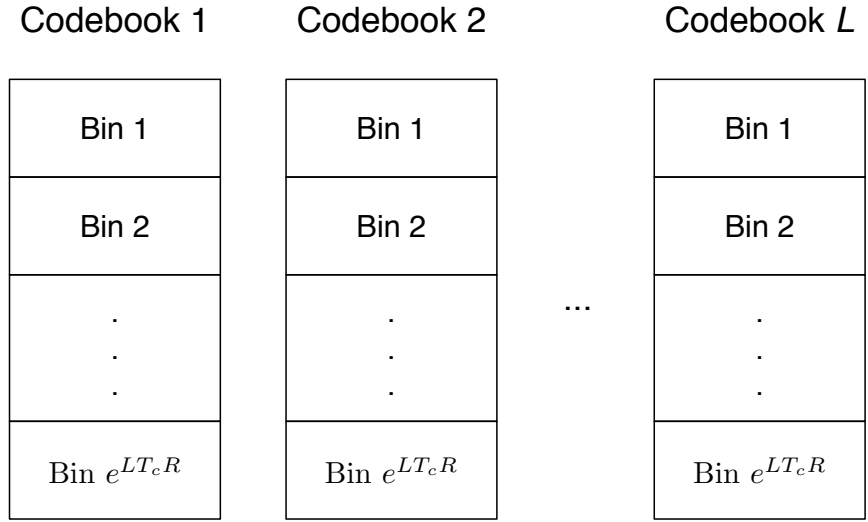}
\caption{The codebook structure for achieving the ergodic rate \eqref{eq:R}. Each codeword bin in the codebooks contains $e^{T_c(I(U;S)+\epsilon)}$ codewords.}
\label{fig}
\end{figure}

\emph{Codebook generation.} Randomly generate $L$ codebooks, each for one coherent block and consisting of $e^{T_c(LR+I(U;S)+\epsilon)}$ codewords of length $T_c$. The entries of the codewords are independently generated according to $P_U$. Randomly partition each codebook into $e^{LT_cR}$ bins, so each bin contains $e^{T_c(I(U;S)+\epsilon)}$ codewords. See Fig.~\ref{fig} for an illustration of the codebook structure.

\emph{Encoding.} Given the message $W$ and the interference signal $S^{LT_c}:=(S[1],\ldots,S[LT_c])$, the encoder looks into the $W$th bin in each codebook $l$ and tries to find a codeword that is jointly typical with $S_l^{T_c}$, where $S_l^{T_c}:=(S[(l-1)T_c+1],\ldots,S[lT_c])$ represents the segment of the interference signal $S^{LT_c}$ transmitted over the $l$th coherent block. By assumption, $T_c$ is sufficiently large so with high probability such a codeword can be found in each codebook \cite{Cos-IT83}. Denote by $U_l^{T_c}:=(U[(l-1)T_c+1],\ldots,U[lT_c])$ the codeword chosen from the $l$th codebook. The transmit signal $X_l^{T_c}:=(X[(l-1)T_c+1],\ldots,X[lT_c])$ over the $l$th coherent block is given by $X_l^{T_c}=U_l^{T_c}-S_l^{T_c}$.

\emph{Decoding.} Let $G_l$ be the realization of the power gain during the $l$th coherent block, and let
\begin{equation}
{\mathcal L}:=\{l: I(U;\sqrt{G_l}(X+S)+Z)-I(U;S) > 0\}.
\label{eq:L}
\end{equation}
Given the received signal $Y^{LT_c}:=(Y[1],\ldots,Y[LT_c])$, the decoder looks for a codeword bin which contains for \emph{each} coherent block $l \in \mathcal{L}$, a codeword that is jointly typical with the segment of $Y^{LT_c}(\mathcal{L})$ received over the $l$th coherent block. If only one such codeword bin can be found, the estimated message $\hat{W}$ is given by the index of the codeword bin. Otherwise, a decoding error is declared.

\emph{Performance analysis.} Note that averaged over the codeword selections and by the union bound, the probability that an incorrect bin index is declared by the decoder is no more than
\begin{equation}
\prod_{l \in \mathcal{L}}e^{T_c(I(U;S)+\epsilon)}\cdot e^{-T_c\left[I(U;\sqrt{G_l}(X+S)+Z)-\epsilon\right]}
= e^{-T_c\sum_{l \in \mathcal{L}}\left[I(U;\sqrt{G_l}(X+S)+Z)-I(U;S)-2\epsilon\right]}.
\end{equation}
Thus, by the union bound again, the probability of decoding error is no more than
\begin{equation}
e^{T_cLR}\cdot e^{-T_c\sum_{l \in \mathcal{L}}\left[I(U;\sqrt{G_l}(X+S)+Z)-I(U;S)-2\epsilon\right]}=
e^{-T_c\left\{\sum_{l \in \mathcal{L}}\left[I(U;\sqrt{G_l}(X+S)+Z)-I(U;S)-2\epsilon\right]-LR\right\}}.
\end{equation}
It follows that the transmit message $W$ can be reliably communicated (with exponentially decaying error probability for sufficiently large $T_c$) as long as
\begin{equation}
\sum_{l \in \mathcal{L}}\left[I(U;\sqrt{G_l}(X+S)+Z)-I(U;S)-2\epsilon\right]-LR > 0
\end{equation}
or equivalently
\begin{equation}
R < \frac{1}{L}\sum_{l \in {\mathcal L}}\left[I(U;\sqrt{G_l}(X+S)+Z)-I(U;S)-2\epsilon\right].
\end{equation}
Note that
\begin{align}
\frac{1}{L}\sum_{l \in {\mathcal L}}&\left[I(U;\sqrt{G_l}(X+S)+Z)-I(U;S)-2\epsilon\right]\\
& = \frac{1}{L}\sum_{l \in {\mathcal L}}\left[I(U;\sqrt{G_l}(X+S)+Z)-I(U;S)\right]-\frac{2|\mathcal{L}|}{L}\epsilon\\
& \geq \frac{1}{L}\sum_{l \in {\mathcal L}}\left[I(U;\sqrt{G_l}(X+S)+Z)-I(U;S)\right]-2\epsilon\label{eq:NFP1}\\
& = \frac{1}{L}\sum_{l=1}^{L}\left[I(U;\sqrt{G_l}(X+S)+Z)-I(U;S)\right]^+-2\epsilon\label{eq:NFP2}\\
&\geq \frac{1}{L}\sum_{l=1}^{L}(\log G_l)^+-2\epsilon\label{eq:JH2000}
\end{align}
where \eqref{eq:NFP1} follows from the fact that $|\mathcal{L}| \leq L$, \eqref{eq:NFP2} follows from the definition of $\mathcal{L}$ from \eqref{eq:L}, and \eqref{eq:JH2000} follows from \eqref{eq:JH1000}. Finally, by the weak law of large numbers,
\begin{equation}
\frac{1}{L}\sum_{l=1}^{L}(\log G_l)^+ \rightarrow \mathbb{E}_G\left[(\log G)^+\right]
\end{equation}
in probability in the limit as $L \rightarrow \infty$. We thus conclude that (\ref{eq:R}) is an achievable ergodic rate for the fading-paper channel (\ref{eq:Ch2}). 

We have thus completed the proof of Theorem~\ref{thm:fp1}.
\end{proof}

The following observations are now in place. First, the boost of the achievable rate from \eqref{eq:R2} to \eqref{eq:R} is mainly due to opportunistic decoding used by the receiver, which ensures that the realizations of the power gain which are less than $1$ do not contribute negatively to the achievable rate. Second, the separate-binning scheme takes advantage of the \emph{block-fading} nature and does not apply to the fast-fading scenario. Finally, the nature of the separate-binning scheme is such that the interference signal $S[t]$ within each coherent block only needs to be made available to the transmitter at the beginning of the block and not necessarily at the start of the entire communication.\footnote{This observation was brought to our attention by one of the referees during the review process.}

\subsection{Additive Expected-Capacity Loss to within One Bit}
Let $C_{exp}^{fp}(\mathrm{INR},F_G,L)$ be the expected capacity of the fading-paper channel (\ref{eq:Ch2}) under the delay constraint of $L$ coherent blocks, and let $A^{fp}(\mathrm{INR},F_G,L):=C_{erg}^{fp}(\mathrm{INR},F_G)-C_{exp}^{fp}(\mathrm{INR},F_G,L)$ be the additive gap between the ergodic capacity $C_{erg}^{fp}(\mathrm{INR},F_G)$ and the expected capacity $C_{exp}^{fp}(\mathrm{INR},F_G,L)$. We have the following results. 

\begin{theorem}\label{thm:fp2}
For any transmit interference-to-noise ratio $\mathrm{INR}$ and any power-gain distribution $F_G(\cdot)$, we have
\begin{equation}
A(F_G,1)-\log2 \leq A^{fp}(\mathrm{INR},F_G,1) \leq A(F_G,1).
\label{eq:fp2}
\end{equation}
\end{theorem}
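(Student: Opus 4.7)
My plan is to reduce Theorem~\ref{thm:fp2} to Theorem~\ref{thm:fp1} by establishing the single-block identity
\begin{equation*}
C_{exp}^{fp}(\mathrm{INR},F_G,1)=C_{exp}(F_G,1).
\end{equation*}
Once this is in hand, subtracting the common quantity $C_{exp}(F_G,1)$ from each term of the chain $C_{erg}(F_G)-\log 2\le C_{erg}^{fp}(\mathrm{INR},F_G)\le C_{erg}(F_G)$ given by \eqref{eq:fp1} yields $A(F_G,1)-\log 2\le A^{fp}(\mathrm{INR},F_G,1)\le A(F_G,1)$, which is exactly Theorem~\ref{thm:fp2}.

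The ``$\le$'' half of the identity follows from a standard receiver-side genie argument: if the receiver is additionally told the interference sequence $\{S[t]\}$, then, since $\{G[t]\}$ is already known at the receiver, it can subtract $\sqrt{G[t]}\,S[t]$ from $Y[t]$, reducing the fading-paper channel to the point-to-point fading channel \eqref{eq:Ch} with the same power-gain distribution. Because supplying additional side information to the receiver cannot decrease the expected capacity under any delay constraint, $C_{exp}^{fp}(\mathrm{INR},F_G,1)\le C_{exp}(F_G,1)$.

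For the matching ``$\ge$'' direction I would mimic the superposition scheme that defines \eqref{eq:P1}, applying Costa-style dirty-paper coding independently on each layer. With $(\beta_1^*,\ldots,\beta_K^*)$ the optimal power allocation from Proposition~\ref{prop:OptPow}, take $X=\sum_{l=1}^K \tilde X_l$ with independent circularly symmetric complex Gaussian layers $\tilde X_l\sim\mathcal{CN}(0,\beta_l^*-\beta_{l-1}^*)$, and encode $\tilde X_l$ using a Gelfand--Pinsker codebook with auxiliary $U_l=\tilde X_l+\alpha_l S$ and Costa's optimal inflation factor
\begin{equation*}
\alpha_l=\frac{g_l(\beta_l^*-\beta_{l-1}^*)}{1+g_l\beta_l^*},
\end{equation*}
which is tuned to signal power $\beta_l^*-\beta_{l-1}^*$, unknown-noise variance $\beta_{l-1}^*+1/g_l$, and known state $S$. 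At user $k$, successive cancellation peels off the lower layers $\tilde X_K,\ldots,\tilde X_{k+1}$; the upper layers $\tilde X_1,\ldots,\tilde X_{k-1}$ are treated as additive Gaussian noise; and the Gelfand--Pinsker decoder for layer $k$ then delivers rate $I(U_k;Y_k)-I(U_k;S)=\log\bigl((1+g_k\beta_k^*)/(1+g_k\beta_{k-1}^*)\bigr)$, matching the $k$-th summand in \eqref{eq:P1}. Weighting by $\{p_k\}$ and using $\sum_k p_k\sum_{l\ge k}R_l=\sum_l F_l R_l$ recovers $C_{exp}(F_G,1)$, completing the lower bound.

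The main obstacle is verifying that each layered Costa codebook---nominally designed for its target user $l$---remains simultaneously decodable at every stronger user $j<l$ that must peel off layer $l$ en route to decoding its own layer. This reduces to checking $I(U_l;Y_j)\ge I(U_l;Y_l)$; it holds because user $j$'s post-normalization channel noise $Z/\sqrt{g_j}$ has strictly smaller variance than $Z/\sqrt{g_l}$ while the remaining interference components are identical, so the Gelfand--Pinsker rate designed for user $l$ is automatically achievable at user $j$ as well. Once this verification is complete, the identity $C_{exp}^{fp}(\mathrm{INR},F_G,1)=C_{exp}(F_G,1)$ is established, and Theorem~\ref{thm:fp2} follows immediately from Theorem~\ref{thm:fp1}.
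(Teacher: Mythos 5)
Your overall reduction is the same as the paper's: subtract the common quantity $C_{exp}^{fp}(\mathrm{INR},F_G,1)=C_{exp}(F_G,1)$ from the sandwich in Theorem~\ref{thm:fp1}. The ``$\le$'' half of that identity via a receiver-side genie is correct. However, your achievability argument for the ``$\ge$'' half has a genuine gap in its binning structure, and it is exactly the step the paper sidesteps by invoking Steinberg~\cite{Ste-IT05}.

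The problem is that you precode each layer only against $S$, setting $U_l=\tilde X_l+\alpha_l S$, and rely on receiver-side successive cancellation to remove the lower layers $\tilde X_K,\ldots,\tilde X_{k+1}$. But Gelfand--Pinsker decoding of layer $l$ returns the auxiliary codeword $U_l$, not the channel input $\tilde X_l=U_l-\alpha_l S$; since the receiver does not know $S$, it cannot reconstruct $\tilde X_l$ and hence cannot subtract $\sqrt{g_k}\tilde X_l$ from $Y_k$. Replacing subtraction by using the decoded $U_{k+1},\ldots,U_K$ as receiver side information gives layer-$k$ rate $I(U_k;Y_k,U_{k+1},\ldots,U_K)-I(U_k;S)$, which falls strictly short of the target $\log\bigl((1+g_k\beta_k^*)/(1+g_k\beta_{k-1}^*)\bigr)$: e.g.\ for $K=2$, as $g_1\to\infty$ one has $\mathrm{Var}(U_1\mid Y_1,U_2)\to(\alpha_1+\alpha_2-1)^2\,\mathrm{Var}(S\mid Y_1,U_2)>0$ because $\alpha_1+\alpha_2\neq1$, so $I(U_1;Y_1,U_2)-I(U_1;S)=\log\bigl(P_1/\mathrm{Var}(U_1\mid Y_1,U_2)\bigr)$ saturates while $\log(1+g_1\beta_1^*)$ diverges. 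Your decodability check $I(U_l;Y_j)\ge I(U_l;Y_l)$ for $j<l$ is correct but addresses a different concern and does not repair this. The fix, which is Steinberg's successive Costa precoding, is to move the cancellation to the transmitter: encode layers in the order $K,K-1,\ldots,1$ with auxiliary $V_k=\tilde X_k+\alpha_k\bigl(S+\tilde X_{k+1}+\cdots+\tilde X_K\bigr)$ (same inflation factor $\alpha_k$, but the known state now includes the already-generated lower layers). Each user then decodes only its own layer directly and Costa's calculation gives exactly $I(V_k;Y_k)-I(V_k;S,\tilde X_{k+1},\ldots,\tilde X_K)=\log\bigl((1+g_k\beta_k^*)/(1+g_k\beta_{k-1}^*)\bigr)$, with no receiver-side peeling required. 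With this correction (or simply by citing~\cite{Ste-IT05}, as the paper does), the rest of your argument goes through.
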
 

\begin{proof}
We claim that for any transmit interference-to-noise ratio $\mathrm{INR}$ and any power-gain distribution $F_G(\cdot)$, we have
\begin{equation}
C_{exp}^{fp}(\mathrm{INR},F_G,1)=C_{exp}(F_G,1).
\label{eq:fp3}
\end{equation}
Then, the desired inequalities in \eqref{eq:fp2} follow immediately from the above claim and Theorem~\ref{thm:fp1}.

To prove \eqref{eq:fp3}, let us consider the following $K$-user memoryless Gaussian broadcast channel:
\begin{equation}
Y_k = \sqrt{g_k}(X+S)+Z, \quad k=1,\ldots,K
\label{eq:GBC}
\end{equation}
where $X$ is the channel input which is subject an average power constraint, $S$ and $Z$ are independent additive white circularly symmetric complex Gaussian interference and noise, and $g_k$ and $Y_k$ are the power gain and the channel output of user $k$, respectively. The interference $S$ is assumed to be non-causally known at the transmitter but not to the receivers. Similar to the interference-free (scalar) Gaussian broadcast channel, the broadcast channel \eqref{eq:GBC} is also (stochastically) degraded. Furthermore, Steinberg \cite{Ste-IT05} showed that through \emph{successive} Costa precoding \cite{Cos-IT83} at the transmitter, the capacity region of the broadcast channel \eqref{eq:GBC} is the \emph{same} as that of the interference-free Gaussian broadcast channel. We may thus conclude that the expected capacity $C_{exp}^{fp}(\mathrm{INR},F_G,1)$ of the fading-paper channel \eqref{eq:Ch2} is the same as the expected capacity $C_{exp}(F_G,1)$ of the interference-free point-to-point fading channel \eqref{eq:Ch} of the same power-gain distribution $F_G(\cdot)$. This completes the proof of Theorem~\ref{thm:fp2}.
\end{proof}

Combining Theorems~\ref{thm:main} and \ref{thm:fp2} immediately leads to the following corollary.
\begin{coro}
\begin{equation}
\log(K/2) \leq \sup_{\mathrm{INR},F_G} A^{fp}(\mathrm{INR},F_G,1) \leq \log{K}.
\end{equation}
where the supreme is over all transmit interference-to-noise ratio $\mathrm{INR}$ and all power-gain distribution $F_G(\cdot)$ with $K$ different possible realizations of the power gain in each coherent block.
\end{coro}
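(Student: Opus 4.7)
The plan is to derive the corollary directly by taking suprema on both sides of the two-sided bound provided by Theorem~\ref{thm:fp2}, and then substituting the worst-case additive gap characterization from Theorem~\ref{thm:main}. Since Theorem~\ref{thm:fp2} already gives the pointwise sandwich
\begin{equation*}
A(F_G,1) - \log 2 \;\leq\; A^{fp}(\mathrm{INR},F_G,1) \;\leq\; A(F_G,1)
\end{equation*}
for every $\mathrm{INR}$ and every $F_G$ (with $K$ power-gain realizations), the argument reduces to manipulating suprema.

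For the upper bound, I would take the supremum of the right-hand inequality over all $\mathrm{INR}$ and all admissible $F_G$. The right-hand side does not depend on $\mathrm{INR}$, so the supremum reduces to $\sup_{F_G} A(F_G,1)$, which by Theorem~\ref{thm:main} equals $\log K$. This immediately gives $\sup_{\mathrm{INR},F_G} A^{fp}(\mathrm{INR},F_G,1) \leq \log K$.

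For the lower bound, I would take the supremum of the left-hand inequality. Again the bound is independent of $\mathrm{INR}$, so
\begin{equation*}
\sup_{\mathrm{INR},F_G} A^{fp}(\mathrm{INR},F_G,1) \;\geq\; \sup_{F_G}\bigl[A(F_G,1) - \log 2\bigr] \;=\; \sup_{F_G} A(F_G,1) - \log 2 \;=\; \log K - \log 2 \;=\; \log(K/2),
\end{equation*}
where the penultimate equality once more uses Theorem~\ref{thm:main}. Combining the two bounds yields the claim.

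There is essentially no obstacle here; the only mild subtlety is that the supremum $\sup_{F_G} A(F_G,1) = \log K$ in Theorem~\ref{thm:main} is achieved only in a limit (via the parametric family of Proposition~\ref{prop:AG2}), but this does not affect the lower bound derivation, since $\sup (f - c) = (\sup f) - c$ for any constant $c$ regardless of whether the supremum is attained.
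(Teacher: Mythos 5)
Your proposal is correct and matches the paper's reasoning exactly: the paper notes that the corollary follows by combining Theorem~\ref{thm:main} with the two-sided bound of Theorem~\ref{thm:fp2}, which is precisely the supremum manipulation you carry out. The remark that $\sup(f-c)=(\sup f)-c$ and so non-attainment of the supremum is immaterial is a nice clarifying touch, though the paper leaves it implicit.
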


\section{Concluding Remarks} \label{sec:Con}
For delay-limited communication over block-fading channels, the difference between the ergodic capacity and the maximum achievable expected rate for coding over a finite number of coherent blocks represents a fundamental measure of the penalty incurred by the delay constraint. This paper introduced a notion of worst-case expected-capacity loss. Focusing on the slow-fading scenario (one-block delay), it was shown that the worst-case additive expected-capacity loss is precisely $\log{K}$ nats per channel use and the worst-case multiplicative expected-capacity loss is precisely $K$, where $K$ is the total number of different possible realizations of the power gain in each coherent block. Extension to the problem of writing on fading paper was also considered, where both the ergodic capacity and the additive expected-capacity loss over one-block delay were characterized to within one bit per channel use.

Many research problems are open along the line of broadcasting over fading channels. Unlike for the case of one-block delay, the expected capacity of the point-to-point fading channel over \emph{multiple}-block delay is unknown except for the case with two-block delay and two different possible realizations of the power gain in each coherent block \cite{WY-IT06,Ste-Tech06}. The main difficulty there is that the capacity region of the \emph{parallel} Gaussian broadcast channel with a general message set configuration remains unknown. With multiple transmit antennas, the expected capacity of the point-to-point fading channel is unknown even for one-block delay \cite{SS-IT03}. Another interesting and challenging scenario is the \emph{mixed-delay} setting, where there are multiple messages of different delay requirement at the transmitter. Some preliminary results can be found in \cite{CSS-ISIT12}. With known interference at the transmitter, one may also consider the setting where the channel fading applies only to the known interference (the fading-dirt problem) \cite{KE-TWireless10} or, more generally, different channel fading applies to the input signal and the known interference separately.

\appendix
\section{Proof of Proposition~\ref{prop:Ce}}\label{App:PA}
Let us first rewrite the expression \eqref{eq:OptPow1} for the expected capacity $C_{exp}(F_G,1)$ as follows:
\begin{eqnarray}
C_{exp}(F_G,1) &=& \sum_{j=1}^K\left(\sum_{k=1}^jp_k\right)\log\left(\frac{n_j+\beta_j^*}{n_j+\beta_{j-1}^*}\right)\\
&=& \sum_{k=1}^Kp_k\left[\sum_{j=k}^K\log\left(\frac{n_j+\beta_j^*}{n_j+\beta_{j-1}^*}\right)\right]\\
&=& \sum_{k=1}^Kp_k\log\Lambda_k
\end{eqnarray}
where 
\begin{equation}
\Lambda_k = \prod_{j=k}^K\frac{n_j+\beta_j^*}{n_j+\beta_{j-1}^*}
\label{eq:Lambda}
\end{equation}
and $(\beta_1^*,\ldots,\beta_K^*)$ is given by \eqref{eq:OptPow}. 

To show that $\Lambda_k$ as given by \eqref{eq:Lambda} equals the right-hand side of \eqref{eq:Ce2}, let us first assume that $s=w$. For this case, by \eqref{eq:OptPow} we have $\beta_j^*=\beta_{j-1}^*$ for every $j \neq \pi_s$. Thus, substituting \eqref{eq:OptPow} into \eqref{eq:Lambda} gives
\begin{equation}
\Lambda_k=
\left\{
\begin{array}{ll}
\frac{n_{\pi_s}+1}{n_{\pi_s}}, & \mbox{for} \quad 1 \leq k \leq \pi_s\\
1, & \mbox{for} \quad \pi_s < k \leq K.
\end{array}
\right.
\end{equation}

Next, let us assume that $s<w$. We shall consider the following three cases separately. 

Case 1: $k \leq \pi_s$. For this case, substituting \eqref{eq:OptPow} into \eqref{eq:Lambda} gives
\begin{eqnarray}
\Lambda_k &=& \frac{n_{\pi_s}+z_{\pi_s,\pi_{s+1}}}{n_{\pi_s}}\left(\prod_{j=s+1}^{w-1}\frac{n_{\pi_j}+z_{\pi_j,\pi_{j+1}}}{n_{\pi_j}+z_{\pi_{j-1},\pi_j}}\right)\frac{n_{\pi_w}+1}{n_{\pi_w}+z_{\pi_{w-1},\pi_w}}\\
&=& \frac{n_{\pi_w}+1}{n_{\pi_s}}\prod_{j=s}^{w-1}\frac{n_{\pi_j}+z_{\pi_j,\pi_{j+1}}}{n_{\pi_{j+1}}+z_{\pi_j,\pi_{j+1}}}\\
&=& \frac{n_{\pi_w}+1}{n_{\pi_s}}\prod_{j=s}^{w-1}\frac{F_{\pi_j}}{F_{\pi_{j+1}}}\label{eq:AG3}\\
&=& \frac{n_{\pi_w}+1}{n_{\pi_s}}\frac{F_{\pi_s}}{F_{\pi_w}}
\end{eqnarray}
where \eqref{eq:AG3} follows from the fact that the MUFs $u_{\pi_j}(z)$ and  $u_{\pi_{j+1}}(z)$ intersect at $z=z_{\pi_j,\pi_{j+1}}$ so we have  
\begin{equation}
\frac{n_{\pi_j}+z_{\pi_j,\pi_{j+1}}}{F_{\pi_j}}=\frac{n_{\pi_{j+1}}+z_{\pi_j,\pi_{j+1}}}{F_{\pi_{j+1}}} \quad \Longleftrightarrow \quad \frac{n_{\pi_j}+z_{\pi_j,\pi_{j+1}}}{n_{\pi_{j+1}}+z_{\pi_j,\pi_{j+1}}}=\frac{F_{\pi_j}}{F_{\pi_{j+1}}}.
\label{eq:AG4}
\end{equation}

Case 2: $\pi_{m-1} < k \leq \pi_m$ for some $m \in \{s+1,\ldots,w\}$. For this case, substituting \eqref{eq:OptPow} into \eqref{eq:Lambda} gives
\begin{eqnarray}
\Lambda_k &=& \left(\prod_{j=m}^{w-1}\frac{n_{\pi_j}+z_{\pi_j,\pi_{j+1}}}{n_{\pi_j}+z_{\pi_{j-1},\pi_j}}\right)\frac{n_{\pi_w}+1}{n_{\pi_w}+z_{\pi_{w-1},\pi_w}}\\
&=& \frac{n_{\pi_w}+1}{n_{\pi_m}+z_{\pi_{m-1},\pi_m}}\prod_{j=m}^{w-1}\frac{n_{\pi_j}+z_{\pi_j,\pi_{j+1}}}{n_{\pi_{j+1}}+z_{\pi_j,\pi_{j+1}}}\\
&=& \frac{n_{\pi_w}+1}{n_{\pi_m}+z_{\pi_{m-1},\pi_m}}\prod_{j=m}^{e-1}\frac{F_{\pi_j}}{F_{\pi_{j+1}}}\label{eq:AG5}\\
&=& \frac{n_{\pi_w}+1}{n_{\pi_m}+z_{\pi_{m-1},\pi_m}}\frac{F_{\pi_m}}{F_{\pi_w}}\\
&=& \frac{n_{\pi_w}+1}{n_{\pi_m}-n_{\pi_{m-1}}}\frac{F_{\pi_m}-F_{\pi_{m-1}}}{F_{\pi_w}}\label{eq:AG6}
\end{eqnarray}
where \eqref{eq:AG5} follows from \eqref{eq:AG4}, and \eqref{eq:AG6} follows from the fact that the MUFs $u_{\pi_{m-1}}(z)$ and $u_{\pi_m}(z)$ intersect at $z=z_{\pi_{m-1},\pi_m}$ so by \eqref{eq:Int} we have  
\begin{equation}
z_{\pi_{m-1},\pi_m}=\frac{F_{\pi_{m-1}}n_{\pi_m}-F_{\pi_m}n_{\pi_{m-1}}}{F_{\pi_m}-F_{\pi_{m-1}}}
\quad \Longleftrightarrow \quad 
\frac{F_{\pi_m}}{n_{\pi_m}+z_{\pi_{m-1},\pi_m}}=\frac{F_{\pi_m}-F_{\pi_{m-1}}}{n_{\pi_m}-n_{\pi_{m-1}}}.
\end{equation}

Case 3: $k>\pi_w$. For this case, we have $\beta_j^*=\beta_{j-1}^*=1$ for any $j \geq k$. Hence, by \eqref{eq:OptPow} we have 
\begin{equation}
\Lambda_k=1.
\end{equation}

Finally, substituting \eqref{eq:Ce2} into \eqref{eq:Ce1.5} gives
\begin{align}
C_{exp}&(F_G,1)\nonumber\\
&= \sum_{k=1}^{\pi_s}p_k\log\Lambda_{\pi_s}+\sum_{m=s+1}^{w}\left(\sum_{k=\pi_{m-1}+1}^{\pi_m}p_k\right)\log\Lambda_{\pi_m}\\
&= F_{\pi_s}\log\Lambda_{\pi_s}+\sum_{m=s+1}^{w}\left(F_{\pi_{m}}-F_{\pi_{m-1}}\right)\log\Lambda_{\pi_m}\\
&= F_{\pi_s}\log\left(\frac{n_{\pi_w}+1}{n_{\pi_s}}\frac{F_{\pi_s}}{F_{\pi_w}}\right)+
\sum_{m=s+1}^{w}\left(F_{\pi_{m}}-F_{\pi_{m-1}}\right)\log\left(\frac{n_{\pi_w}+1}{n_{\pi_m}-n_{\pi_{m-1}}}\frac{F_{\pi_m}-F_{\pi_{m-1}}}{F_{\pi_w}}\right)\\
&= F_{\pi_s}\log\left(\frac{F_{\pi_s}}{n_{\pi_s}}\right)+
\sum_{m=s+1}^{w}\left(F_{\pi_m}-F_{\pi_{m-1}}\right)\log\left(\frac{F_{\pi_m}-F_{\pi_{m-1}}}{n_{\pi_m}-n_{\pi_{m-1}}}\right)+F_{\pi_w}\log\left(\frac{n_{\pi_w}+1}{F_{\pi_w}}\right).
\end{align}
This completes the proof of Proposition~\ref{prop:Ce}.

\section{Proof of Lemma~\ref{lemma:AG}}\label{App:AG}
Let us consider the following three cases separately.

Case 1: $k \leq \pi_s$. For such $k$, by property 3) of Lemma~\ref{lemma:Mon} and the definition of $s$ we have
\begin{equation}
\frac{F_kn_{\pi_{s}}-F_{\pi_{s}}n_k}{F_{\pi_{s}}-F_k}=z_{k,\pi_{s}} \leq z_{\pi_{s-1},\pi_{s}} \leq 0
\end{equation}
which implies that 
\begin{equation}
\frac{n_{\pi_s}}{F_{\pi_s}} \leq \frac{n_k}{F_k}. \label{eq:A3}
\end{equation}
By the expression of $\Lambda_k$ from \eqref{eq:Ce2}, for $k \leq \pi_s$ we have
\begin{eqnarray}
\frac{n_k+1}{n_k\Lambda_k} &=& \frac{n_k+1}{n_{\pi_w}+1}\frac{F_{\pi_w}n_{\pi_s}}{F_{\pi_s}n_k}\\
& \leq & \frac{n_k+1}{n_{\pi_w}+1}\frac{F_{\pi_w}}{F_k}\label{eq:A4}\\
& \leq & \frac{1}{p_k}\label{eq:A5}
\end{eqnarray}
where \eqref{eq:A4} follows from \eqref{eq:A3}, and \eqref{eq:A5} follows from the fact that $n_k+1 \leq n_{\pi_s}+1 \leq n_{\pi_w}+1$, $F_{\pi_w} \leq 1$, and $F_k \geq p_k$.

Case 2: $\pi_{m-1} < k \leq \pi_m$ for some $m \in \{s+1,\ldots,w\}$. For such $k$, by \eqref{eq:Ce2} we have
\begin{equation}
\frac{n_k+1}{n_k\Lambda_k} = \frac{n_k+1}{n_{\pi_w}+1}\frac{n_{\pi_m}-n_{\pi_{m-1}}}{F_{\pi_m}-F_{\pi_{m-1}}}\frac{F_{\pi_w}}{n_k}.
\label{eq:A6}
\end{equation}
By property 1) of Lemma~\ref{lemma:Mon} we have $z_{\pi_{m-1},\pi_m} \leq z_{\pi_{m-1},k}$ which implies that
\begin{equation}
\frac{n_{\pi_m}-n_{\pi_{m-1}}}{F_{\pi_m}-F_{\pi_{m-1}}}=\frac{n_{\pi_{m-1}}+z_{\pi_{m-1},\pi_m}}{F_{\pi_{m-1}}} \leq \frac{n_{\pi_{m-1}}+z_{\pi_{m-1},k}}{F_{\pi_{m-1}}}=\frac{n_{k}-n_{\pi_{m-1}}}{F_{k}-F_{\pi_{m-1}}}.
\label{eq:A7}
\end{equation}
Substituting \eqref{eq:A7} into \eqref{eq:A6} gives
\begin{equation}
\frac{n_k+1}{n_k\Lambda_k} \leq \frac{n_k+1}{n_{\pi_w}+1}\frac{n_{k}-n_{\pi_{m-1}}}{F_{k}-F_{\pi_{m-1}}}\frac{F_{\pi_w}}{n_k} \leq \frac{1}{p_k}\label{eq:A8}
\end{equation}
where the last inequality follows from the fact that $n_k+1 \leq n_{\pi_m}+1 \leq n_{\pi_w}+1$, $n_{k}-n_{\pi_{m-1}} \leq n_{k}$, $F_{\pi_w} \leq 1$, and $F_{k}-F_{\pi_{m-1}} \geq p_k$.

Case 3: $k > \pi_w$. For such $k$, by \eqref{eq:Ce2} we have $\Lambda_k=1$ and hence
\begin{equation}
\frac{n_k+1}{n_k\Lambda_k} = \frac{n_k+1}{n_k}.\label{eq:A10}
\end{equation}
By property 1) of Lemma~\ref{lemma:Mon} and the definition of $w$, we have $1 \leq z_{\pi_w,\pi_{w+1}} \leq z_{\pi_w,k}$, which implies that
\begin{equation}
n_k+1 \leq n_k+z_{\pi_w,k}=\frac{F_k(n_k-n_{\pi_w})}{F_k-F_{\pi_w}}.\label{eq:A11}
\end{equation}
Substituting \eqref{eq:A11} into \eqref{eq:A10} gives
\begin{equation}
\frac{n_k+1}{n_k\Lambda_k} \leq \frac{F_k}{F_k-F_{\pi_w}}\frac{n_k-n_{\pi_w}}{n_k} \leq \frac{1}{p_k}
\end{equation}
where the last inequality follows from the fact that $n_{k}-n_{\pi_w} \leq n_{k}$, $F_{k} \leq 1$, and $F_{k}-F_{\pi_w} \geq p_k$.

Combining the above three cases completes the proof of Lemma~\ref{lemma:AG}.

\section{Proof of Lemma~\ref{lemma:MG}}\label{App:MG}
Let us begin by establishing a simple lower bound on the expected capacity $C_{exp}(F_G,1)$. Applying the long-sum inequality
\begin{equation}
\sum_{i}a_i\log\frac{a_i}{b_i} \geq \left(\sum_{i}a_i\right)\log\frac{\sum_{i}a_i}{\sum_{i}b_i}\label{eq:logsum}
\end{equation}
we have 
\begin{equation}
F_{\pi_s}\log\left(\frac{F_{\pi_s}}{n_{\pi_s}}\right)+
\sum_{m=s+1}^{w}\left(F_{\pi_m}-F_{\pi_{m-1}}\right)\log\left(\frac{F_{\pi_m}-F_{\pi_{m-1}}}{n_{\pi_m}-n_{\pi_{m-1}}}\right) \geq F_{\pi_w}\log\left(\frac{F_{\pi_w}}{n_{\pi_w}}\right). \label{eq:B2}
\end{equation}
Substituting \eqref{eq:B2} into the expression of $C_{exp}(F_G,1)$ from \eqref{eq:Ce1.6}, we have
\begin{eqnarray}
C_{exp}(F_G,1) & \geq & F_{\pi_w}\log\left(\frac{F_{\pi_w}}{n_{\pi_w}}\right)+F_{\pi_w}\log\left(\frac{n_{\pi_w}+1}{F_{\pi_w}}\right)\\
& = & F_{\pi_w}\log\left(\frac{n_{\pi_w}+1}{n_{\pi_w}}\right).
\label{eq:B0}
\end{eqnarray}

Next we shall prove the desired inequality \eqref{eq:MG4} by considering the following four cases separately.

Case 1: $k>\pi_w$. For such $k$, by property 1) of Lemma~\ref{lemma:Mon} and the definition of $w$ we have $z_{\pi_w,k} \geq z_{\pi_w,\pi_{w+1}} \geq 1$ and hence
\begin{equation}
\frac{n_{\pi_w}+1}{F_{\pi_w}} \leq \frac{n_{\pi_w}+z_{\pi_w,k}}{F_{\pi_w}}=\frac{n_k-n_{\pi_w}}{F_k-F_{\pi_w}}.\label{eq:B1}
\end{equation}
Thus
\begin{eqnarray}
\frac{p_k\log\left(\frac{n_k+1}{n_k}\right)}{C_{exp}(F_G,1)} 
& \leq & \frac{p_k\log\left(\frac{n_k+1}{n_k}\right)}{F_{\pi_w}\log\left(\frac{n_{\pi_w}+1}{n_{\pi_w}}\right)}\label{eq:B3}\\
& \leq & \frac{p_k}{F_{\pi_w}}\frac{n_{\pi_w}+1}{n_k}\label{eq:B4}\\
& \leq & \frac{p_k}{n_k}\frac{n_k-n_{\pi_w}}{F_k-F_{\pi_w}}\label{eq:B5}\\
& \leq & 1\label{eq:B6}
\end{eqnarray}
where \eqref{eq:B3} follows from \eqref{eq:B0}, \eqref{eq:B4} is due to the well-know inequalities \eqref{eq:LogIn}
so $\log\left(\frac{n_k+1}{n_k}\right) \leq \frac{1}{n_k}$, and $\log\left(\frac{n_{\pi_w}+1}{n_{\pi_w}}\right) \geq \frac{1}{n_{\pi_w}+1}$, \eqref{eq:B5} follows from \eqref{eq:B1}, and \eqref{eq:B6} is due to the fact that $n_k-n_{\pi_w} \leq n_k$ and $F_k-F_{\pi_w}\geq p_k$.

Case 2: $k=\pi_w$. For such $k$, by \eqref{eq:B0} we have
\begin{equation}
\frac{\log\left(\frac{n_k+1}{n_k}\right)}{C_{exp}(F_G,1)} 
\leq \frac{\log\left(\frac{n_{\pi_w}+1}{n_{\pi_w}}\right)}{F_{\pi_w}\log\left(\frac{n_{\pi_w}+1}{n_{\pi_w}}\right)}= \frac{1}{F_{\pi_w}}\label{eq:B8}
\end{equation}
and hence 
\begin{equation}
\frac{p_k\log\left(\frac{n_k+1}{n_k}\right)}{C_{exp}(F_G,1)} \leq \frac{p_{\pi_w}}{F_{\pi_w}}\leq 1.\label{eq:B9}
\end{equation}

Case 3: $k=\pi_m$ for some $m \in \{s,\ldots,w-1\}$. For this case, we shall show that for any $m \in \{s,\ldots,w-1\}$
\begin{equation}
\frac{\log\left(\frac{n_{\pi_m}+1}{n_{\pi_m}}\right)}{C_{exp}(F_G,1)} \leq \frac{1}{F_{\pi_m}}
\label{eq:B10}
\end{equation}
and hence 
\begin{equation}
\frac{p_{\pi_m}\log\left(\frac{n_{\pi_m}+1}{n_{\pi_m}}\right)}{C_{exp}(F_G,1)} \leq \frac{p_{\pi_m}}{F_{\pi_m}} \leq 1.
\end{equation}

To prove \eqref{eq:B10}, let us define $g(z) :=N(z)/D(z)$ where
\begin{align}
N(z) &=\log\left(\frac{n_{\pi_m}+z}{n_{\pi_m}}\right) \label{eq:Nz}\\
D(z) &=F_{\pi_s}\log\left(\frac{F_{\pi_s}}{n_{\pi_s}}\right)+
\sum_{i=s+1}^{w}\left(F_{\pi_i}-F_{\pi_{i-1}}\right)\log\left(\frac{F_{\pi_i}-F_{\pi_{i-1}}}{n_{\pi_i}-n_{\pi_{i-1}}}\right)+F_{\pi_w}\log\left(\frac{n_{\pi_w}+z}{F_{\pi_w}}\right). \label{eq:Dz}
\end{align}
By Lemma~\ref{lemma:Mon} and the definition of $s$ and $w$, we have
\begin{equation}
0 < z_{\pi_s,\pi_{s+1}} \leq z_{\pi_m,\pi_{m+1}} \leq z_{\pi_m,\pi_w} \leq z_{\pi_{w-1},\pi_w} < 1.
\label{eq:B12}
\end{equation}
By the expression of $C_{exp}(F_G,1)$ from \eqref{eq:Ce1.6}, we have
\begin{equation}
\frac{\log\left(\frac{n_{\pi_m}+1}{n_{\pi_m}}\right)}{C_{exp}(F_G,1)} =g(1) \leq \sup_{z \geq z_{\pi_m,\pi_w}}g(z)\label{eq:B10.5}
\end{equation}
where the last inequality follows from the fact that $z_{\pi_m,\pi_w} < 1$ as mentioned in \eqref{eq:B12}. Next, we shall show that $g(z) \leq 1/F_{\pi_m}$ at the boundary points $z=z_{\pi_m,\pi_w}$ and $z=\infty$, and for any \emph{local maximum} $z^*>z_{\pi_m,\pi_w}$. We may then conclude that 
\begin{equation}
\sup_{z \geq z_{\pi_m,\pi_w}}g(z) \leq 1/F_{\pi_m}.\label{eq:B11}
\end{equation}

First, since $m<w$ we have
\begin{equation}
g(\infty)=1/F_{\pi_w}\leq 1/F_{\pi_m}.
\end{equation}

Next, to show that $g(z_{\pi_m,\pi_w}) \leq 1/F_{\pi_m}$, let us apply the log-sum inequality \eqref{eq:logsum} to obtain
\begin{equation}
F_{\pi_s}\log\left(\frac{F_{\pi_s}}{n_{\pi_s}}\right)+
\sum_{i=s+1}^{m}\left(F_{\pi_i}-F_{\pi_{i-1}}\right)\log\left(\frac{F_{\pi_i}-F_{\pi_{i-1}}}{n_{\pi_i}-n_{\pi_{i-1}}}\right) \geq F_{\pi_m}\log\left(\frac{F_{\pi_m}}{n_{\pi_m}}\right)\label{eq:B20}
\end{equation}
and
\begin{equation}
\sum_{i=m+1}^{w}\left(F_{\pi_i}-F_{\pi_{i-1}}\right)\log\left(\frac{F_{\pi_i}-F_{\pi_{i-1}}}{n_{\pi_i}-n_{\pi_{i-1}}}\right) \geq (F_{\pi_w}-F_{\pi_m})\log\left(\frac{F_{\pi_w}-F_{\pi_m}}{n_{\pi_w}-n_{\pi_m}}\right).\label{eq:B30}
\end{equation}
Substituting \eqref{eq:B20} and \eqref{eq:B30} into \eqref{eq:Dz} gives
\begin{align}
D(z_{\pi_m,\pi_w}) & \geq F_{\pi_m}\log\left(\frac{F_{\pi_m}}{n_{\pi_m}}\right)+(F_{\pi_w}-F_{\pi_m})\log\left(\frac{F_{\pi_w}-F_{\pi_m}}{n_{\pi_w}-n_{\pi_m}}\right)+F_{\pi_w}\log\left(\frac{n_{\pi_w}+z_{\pi_m,\pi_w}}{F_{\pi_w}}\right)\\
&= F_{\pi_m}\log\left(\frac{F_{\pi_m}}{n_{\pi_m}}\frac{n_{\pi_w}-n_{\pi_m}}{F_{\pi_w}-F_{\pi_m}}\right)+
F_{\pi_w}\log\left(\frac{F_{\pi_w}-F_{\pi_m}}{n_{\pi_w}-n_{\pi_m}}\frac{n_{\pi_w}+z_{\pi_m,\pi_w}}{F_{\pi_w}}\right)\\
&= F_{\pi_m}\log\left(\frac{n_{\pi_m}+z_{\pi_m,\pi_w}}{n_{\pi_m}}\right)\label{eq:B40}\\
&= F_{\pi_m}N(z_{\pi_m,\pi_w})\label{eq:B50}
\end{align}
where \eqref{eq:B40} follows from the fact that the MUFs $u_{\pi_m}(z)$ and $u_{\pi_w}(z)$ intersect at $z=z_{\pi_m,\pi_w}$ so we have
\begin{equation}
\frac{F_{\pi_m}}{n_{\pi_m}+z_{\pi_m,\pi_w}}=\frac{F_{\pi_w}}{n_{\pi_w}+z_{\pi_m,\pi_w}}=\frac{F_{\pi_w}-F_{\pi_m}}{n_{\pi_w}-n_{\pi_m}}.
\label{eq:B60}
\end{equation}
It follows immediately from \eqref{eq:B50} that 
\begin{equation}
g(z_{\pi_m,\pi_w})=N(z_{\pi_m,\pi_w})/D(z_{\pi_m,\pi_w})\leq1/F_{\pi_m}.
\end{equation}

Finally, to show that $g(z^*) \leq 1/F_{\pi_m}$ for any local maximum $z^* > z_{\pi_m,\pi_w}$, let us note that $g(z)$ is continuous and differentiable for all $z > z_{\pi_m,\pi_w}$ so $z^*$ must satisfy
\begin{equation}
\left.\frac{d}{dz}g(z)\right|_{z^*}=0
\end{equation}
or equivalently
\begin{equation}
\left.\frac{dN(z)}{dz}D(z)\right|_{z^*}=\left.\frac{dD(z)}{dz}N(z)\right|_{z^*}.
\end{equation}
We thus have 
\begin{align}
g(z^*) &=\frac{N(z^*)}{D(z^*)}=\left.\frac{dN(z)/dz}{dD(z)/dz}\right|_{z^*}\\
&=\frac{1}{F_{\pi_w}}\frac{n_{\pi_w}+z^*}{n_{\pi_m}+z^*}\\
&\leq \frac{1}{F_{\pi_w}}\frac{n_{\pi_w}+z_{\pi_m,\pi_w}}{n_{\pi_m}+z_{\pi_m,\pi_w}}\label{eq:B51}\\
&=\frac{1}{F_{\pi_m}}\label{eq:B52}
\end{align}
where \eqref{eq:B51} follows from the facts that $n_{\pi_w} > n_{\pi_m}$ so $\frac{n_{\pi_w}+z}{n_{\pi_m}+z}$ is a monotone decreasing function of $z$ for $z \geq 0$ and that $z^* \geq z_{\pi_m,\pi_w} > 0$, and \eqref{eq:B52} follows from \eqref{eq:B60}.

Substituting \eqref{eq:B11} into \eqref{eq:B10.5} completes the proof of the desired inequality \eqref{eq:B10} for Case 3.

Case 4: $k < \pi_w$ but $k \neq \pi_i$ for any $i=s,\ldots,w-1$. For such $k$, let $m$ be the \emph{smallest} integer from $\{s,\ldots,w\}$ such that $k<\pi_m$. Note that
\begin{eqnarray}
\frac{p_k\log\left(\frac{n_k+1}{n_k}\right)}{C_{exp}(F_G,1)} 
&=& \frac{p_k\log\left(\frac{n_k+1}{n_k}\right)}{\log\left(\frac{n_{\pi_m}+1}{n_{\pi_m}}\right)}
\frac{\log\left(\frac{n_{\pi_m}+1}{n_{\pi_m}}\right)}{C_{exp}(F_G,1)}\\
& \leq & \frac{p_k\log\left(\frac{n_k+1}{n_k}\right)}{F_{\pi_m}\log\left(\frac{n_{\pi_m}+1}{n_{\pi_m}}\right)}\label{eq:B61}\\
& =& \frac{p_k}{F_{\pi_m}}f(1)\label{eq:B62}
\end{eqnarray}
where \eqref{eq:B61} follows from \eqref{eq:B8} for $m=w$ and \eqref{eq:B10} for $m=s,\ldots,w-1$, and
\begin{equation}
f(z):=\frac{\log\left(\frac{n_k+z}{n_k}\right)}{\log\left(\frac{n_{\pi_m}+z}{n_{\pi_m}}\right)}.
\end{equation} 
Since $n_{k} < n_{\pi_m}$, $f(z)$ is a monotone decreasing function for $z > 0$. By Lemma~\ref{lemma:Mon} and the definition of $w$, we have
\begin{equation}
z_{k,\pi_m} \leq z_{\pi_{m-1},\pi_m} \leq z_{\pi_{w-1},\pi_w} < 1.\label{eq:B70}
\end{equation}
We shall consider the following two sub-cases separately.

Sub-case 4.1: $z_{k,\pi_m} >0$. By the monotonicity of $f(z)$ and the fact that $1>z_{k,\pi_m}>0$ as mentioned in \eqref{eq:B70}, we have
\begin{equation}
f(1) \leq f(z_{k,\pi_m})= \frac{\log\left(\frac{n_k+z_{k,\pi_m}}{n_k}\right)}{\log\left(\frac{n_{\pi_m}+z_{k,\pi_m}}{n_{\pi_m}}\right)}\leq \frac{n_{\pi_m}+z_{k,\pi_m}}{n_k}\label{eq:B80}
\end{equation}
where the last inequality follows from the inequalities \eqref{eq:LogIn} so we have $\log\left(\frac{n_k+z_{k,\pi_m}}{n_k}\right) \leq \frac{z_{k,\pi_m}}{n_k}$ and $\log\left(\frac{n_{\pi_m}+z_{k,\pi_m}}{n_{\pi_m}}\right) \geq \frac{z_{k,\pi_m}}{n_{\pi_m}+z_{k,\pi_m}}$. By Lemma~\ref{lemma:Mon} and the fact that $k < \pi_m$, we have $z_{\pi_{m-1},\pi_m} \geq z_{k,\pi_{m}}>0$ and hence $m \geq s+1$. Therefore, $k \neq \pi_{m-1}$ and we must have $k > \pi_{m-1}$. Again, by Lemma~\ref{lemma:Mon} we have $z_{k,\pi_{m}} \leq z_{\pi_{m-1},\pi_{m}} \leq z_{\pi_{m-1},k}$ and hence
\begin{equation}
\frac{n_{\pi_m}+z_{k,\pi_{m}}}{F_{\pi_m}}=\frac{n_{k}+z_{k,\pi_{m}}}{F_k} \leq \frac{n_k+z_{\pi_{m-1},k}}{F_k}=\frac{n_k-n_{\pi_{m-1}}}{F_k-F_{\pi_{m-1}}}.\label{eq:B90}
\end{equation}
Substituting \eqref{eq:B90} into \eqref{eq:B80} gives
\begin{equation}
f(1) \leq \frac{F_{\pi_m}(n_k-n_{\pi_{m-1}})}{n_k(F_k-F_{\pi_{m-1}})} \leq \frac{F_{\pi_m}}{F_k-F_{\pi_{m-1}}}.\label{eq:B100}
\end{equation}
Further substituting \eqref{eq:B100} into \eqref{eq:B62} gives
\begin{equation}
\frac{p_k\log\left(\frac{n_k+1}{n_k}\right)}{C_{exp}(F_G,1)}  \leq \frac{p_k}{F_k-F_{\pi_{m-1}}} \leq 1.
\end{equation}

Sub-case 4.2: $z_{k,\pi_m} \leq 0$. In this case, $z_{k,\pi_m} =\frac{F_kn_{\pi_m}-F_{\pi_m}n_k}{F_{\pi_m}-F_k}\leq 0$ so we have $F_kn_{\pi_m}\leq F_{\pi_m}n_k$. By the monotonicity of $f(z)$ we have
\begin{equation}
f(1) \leq \lim_{z \downarrow 0}f(z)=n_{\pi_m}/n_k\leq F_{\pi_m}/F_k.\label{eq:B110}
\end{equation}
Substituting \eqref{eq:B110} into \eqref{eq:B62} gives
\begin{equation}
\frac{p_k\log\left(\frac{n_k+1}{n_k}\right)}{C_{exp}(F_G,1)}  \leq \frac{p_k}{F_k} \leq 1.
\end{equation}

Combining the above two sub-cases completes the proof for Case 4. We have thus completed the proof of Lemma~\ref{lemma:MG}.

\section*{Acknowledgement}
Tie Liu would like to thank Dr. Jihong Chen for discussions that have inspired some ideas of the paper and two anonymous referees whose comments have helped to improve the presentation of the paper.

\end{document}